\newenvironment{proofof}[1]{\noindent{\bf Proof of #1.}}%
        {\hspace*{\fill}$\Box$\par\vspace{4mm}}
\newcommand{\bigOmega}[1]{\ensuremath{\Omega\left(#1\right)}}
\newcommand{\saplg}{\mbox{\sf SA}$^+_\ell(G)$}
\newcommand{\saplgn}{\mbox{\sf SA}$^+_\ell(G_n)$}
\newcommand{\yh}{\hat{y}}
\newcommand{\qstab}{{\sf QSTAB}\xspace}
\newcommand{\be}{\begin{enumerate}}
\newcommand{\ee}{\end{enumerate}}
\newcommand{\bd}{\begin{description}}
\newcommand{\ed}{\end{description}}
\newcommand{\bi}{\begin{itemize}}
\newcommand{\ei}{\end{itemize}}
\def\square{\vbox{\hrule height.2pt\hbox{\vrule width.2pt height5pt \kern5pt
\vrule width.2pt} \hrule height.2pt}}
\newenvironment{prog}[1]{
\begin{minipage}{5.8 in}
{\sc\bf #1}
\begin{enumerate}}
{
\end{enumerate}
\end{minipage}
}
\newcommand{\cC}{\ensuremath{\mathcal{C}}}
\renewcommand{\phi}{\varphi}
\newcommand{\eps}{\epsilon}
\newcommand{\ksp}{$k$\mbox{\sf -SetPacking}\xspace}
\newcommand{\ignore}[1]{}
\begin{document}
\title{Independent set in $k$-Claw-Free Graphs: Conditional $\chi$-boundedness and the Power of LP/SDP Relaxations\thanks{This research was partially done during the 
trimester on Discrete Optimization at Hausdorff Research Institute for
Mathematics (HIM) in Bonn, Germany. The research has received funding from the European Research Council (ERC) under the European Union’s Horizon 2020 research and innovation programme (grant agreement No 759557) and from  Academy of Finland (grant number  310415). Kamyar Khodamoradi was supported by Deutsche Forschungsgemeinschaft (project number 399223600).}}
\titlerunning{Independent set in $k$-Claw-Free Graphs}
%
\author{Parinya Chalermsook\inst{1} \and
Ameet Gadekar\inst{1} \and
Kamyar Khodamoradi \inst{2} \and
Joachim Spoerhase\inst{3}}
\authorrunning{P. Chalermsook et al.}
%
\institute{Aalto University, Espoo, Finland\\ \email{\{parinya.chalermsook,ameet.gadekar\}@aalto.fi}  \and University of Regina, Canada\\ \email{kamyar.khodamoradi@uregina.ca}
\and  University of Sheffield, UK\\ \email{j.spoerhase@sheffield.ac.uk}}
%
\maketitle              
\begin{abstract}
This paper studies $k$-claw-free graphs, exploring the connection between an extremal combinatorics question and the power of a convex program in approximating the maximum-weight independent set in this graph class. 
For the extremal question, we consider the notion, that we call \textit{conditional $\chi$-boundedness} of a graph: Given a graph $G$ that is assumed to contain an independent set of a certain (constant) size, we are interested in upper bounding the chromatic number in terms of the clique number of $G$. 
This question, besides being interesting on its own, has algorithmic implications (which have been relatively neglected in the literature) on the performance of SDP relaxations in estimating the value of maximum-weight independent set. 

For $k=3$, Chudnovsky and Seymour (JCTB 2010)  prove that any $3$-claw-free graph $G$ with an independent set of size three must satisfy $\chi(G) \leq 2 \omega(G)$. Their result  implies a factor $2$-estimation algorithm for the maximum weight independent set via an SDP relaxation (providing the first non-trivial result for maximum-weight independent set in such graphs via a convex relaxation). An obvious open question is whether a similar conditional $\chi$-boundedness phenomenon holds for any $k$-claw-free graph.  
Our main result answers this question negatively.
We further present some evidence that our construction could be useful in  studying more broadly the power of convex relaxations in the context of approximating maximum weight independent set in $k$-claw free graphs.  In particular, we prove a lower bound  on families of convex programs that are  stronger than  known convex relaxations used algorithmically in this context. 
\keywords{$\chi$-boundedness  \and Convex relaxation \and Ramsey theory.} 
\end{abstract}
\section{Introduction}

For $k \geq 3$, a graph is said to be $k$-claw-free if the neighborhood of each vertex does not contain an independent set of size $k$. 
This paper focuses on an extremal question in $k$-claw-free graphs and its connection to the power of convex programs in estimating the maximum weight independent set (MWIS) in such graphs. 
The study of such relation originated already around 50 years ago when Lov\'{a}sz defined the notion of perfect graphs based on graph extremal properties and showed connections to (exact) semi-definite programming {formulations} for optimization problems~\cite{grotschel1984polynomial,lovasz1972characterization}. 
Such connections are known to be ``approximation preserving'' so they  imply a connection between standard Ramsey-type {theorems} (and $\chi$-boundedness) in approximating the cardinality (resp. the weight) of the maximum independent set. 
Several approximation algorithms in geometric intersection graphs have been successfully derived in this framework~\cite{lewin2002routing,chalermsook2009maximum,chalermsook2021coloring,chan2009approximation}. 

Most prior works that extend  perfect graphs rely on the notion of the clique constrained stable set polytope (QSTAB)---a convex relaxation that can be optimized via semi-definite programs.\footnote{Recall that the polytope is defined as ${\sf QSTAB}(G)= \{x \in [0,1]^{|V(G)|}: \sum_{i \in Q} x_i \leq 1 (\forall \mbox{clique $Q$})\}$. Optimizing this itself is NP-hard, but we can optimize an SDP whose solution is feasible for QSTAB.}  The power of QSTAB is captured precisely by standard extremal bounds. For example,
 the $\chi$-boundedness  $\chi(G) \leq \gamma \omega(G)$ in a ``natural'' graph class is (roughly) equivalent to QSTAB providing $\gamma$-estimate on the weight of maximum independent set in the same graph class~\cite{chalermsook2016note}. 
 Despite successful cases of the extremal approach, QSTAB fails unexpectedly  in graph classes such as $k$-claw-free graphs: For any $k\geq 3$, a simple greedy algorithm immediately gives a factor $(k-1)$ approximation for MWIS, while QSTAB (and other known convex relaxations) is unable to give $f(k)$ approximation for any function $f$ (see Appendix~\ref{sec: qstab bad}).

This work is an attempt to better understand the power of convex relaxations for approximating MWIS in $k$-claw-free graphs. MWIS on $k$-claw-free graphs contains many well-known (open) problems as special cases, such as set packing~\cite{cygan2013improved,chandra2001greedy,hazan2006complexity,hochbaum1983efficient} and independent set in sparse graphs~\cite{bansal2015lovasz,austrin2011inapproximability}, for which QSTAB has been shown to perform relatively well in terms of approximating the problems. Somewhat surprisingly, the study of convex relaxations for MWIS on $k$-claw-free graphs has been  absent from the literature.

This paper is inspired by the following theorem of Chudnovsky and Seymour.

\begin{theorem}[Chudnovsky--Seymour~\cite{CHUDNOVSKY2010560}]
\label{thm:CS}
For every connected claw-free $(k=3)$ graph $G$ with $\alpha(G) \ge 3$, it holds that $\chi(G) \le 2 \omega(G)$.
\end{theorem}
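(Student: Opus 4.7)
The plan is to invoke the Chudnovsky--Seymour structure theorem for claw-free graphs, which classifies every connected claw-free graph as either belonging to a finite list of ``basic'' classes (notably line graphs, induced subgraphs of the icosahedron, antiprismatic graphs, and certain circular-arc based structures) or being obtainable from smaller claw-free graphs via a small set of composition operations (thickenings and various kinds of $2$-joins). The hypothesis $\alpha(G) \geq 3$ is essential: it excludes complements of triangle-free graphs (which are claw-free with $\alpha = 2$), for which the bound $\chi \leq 2\omega$ fails since the Ramsey bound $R(3,t) = \Theta(t^2/\log t)$ forces the ratio $\chi/\omega$ to be unbounded.

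I would proceed in two stages. First, I would verify $\chi(G) \leq 2\omega(G)$ for each basic class. For line graphs $G = L(H)$, Vizing's theorem yields $\chi(L(H)) \leq \Delta(H) + 1 \leq \omega(L(H)) + 1 \leq 2\omega(L(H))$ whenever $\omega(L(H)) \geq 1$. For induced subgraphs of the icosahedron, a direct finite verification suffices since $\omega \leq 3$ and $\chi \leq 4$. For the remaining basic classes (quasi-line and antiprismatic structures), the underlying circular/geometric representation is exploited to construct an explicit coloring matching the $2\omega$ bound; the assumption $\alpha \geq 3$ is invoked here to rule out small dense configurations where the clique number would otherwise be too small relative to the chromatic number.

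The second stage is an induction over the decomposition tree dictated by the structure theorem, handling each composition operation. If $G$ is obtained by composing smaller claw-free graphs $G_1, G_2$ across a controlled interface, one combines colorings of $G_1, G_2$ (valid by induction) while tracking how $\omega(G)$ relates to $\omega(G_1), \omega(G_2)$; typically the interface structure can be exploited to merge color classes across the two sides so that the total count stays within $2\omega(G)$. A subtle point is to verify that the assumption $\alpha \geq 3$ either propagates to each part, or that the parts for which it fails can be treated as base cases using the Ramsey/fractional arguments of the first stage.

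The main obstacle is this composition step: the case analysis traverses several distinct join-type operations, and for each one must argue that no composition can create a hidden dense substructure that inflates $\chi$ beyond $2\omega$. This is where the heavy machinery of the Chudnovsky--Seymour structural theory carries most of the weight, and where the constant factor $2$ (as opposed to a smaller factor such as the conjectured $3/2$) is ultimately forced by the worst composition configurations.
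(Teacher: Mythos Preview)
The paper does not prove this statement at all: Theorem~\ref{thm:CS} is quoted verbatim as a result of Chudnovsky and Seymour with a citation to~\cite{CHUDNOVSKY2010560}, and is used purely as motivation and as a black box input to Theorem~\ref{thm: connection}. There is therefore no ``paper's own proof'' to compare your proposal against.

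That said, your sketch is in the right spirit relative to the original Chudnovsky--Seymour argument, which does indeed proceed through their structure theory for claw-free graphs. A few cautions if you intend to flesh this out. First, the basic classes in the Chudnovsky--Seymour decomposition are more numerous and more delicate than the four you list; in particular the ``fuzzy'' circular-interval and antiprismatic cases require nontrivial work, and the bound $\chi \leq 2\omega$ for quasi-line graphs is itself a separate theorem. Second, your inductive step over compositions is where the real content lies, and the claim that $\alpha \geq 3$ ``propagates to each part'' is not generally true---one side of a $2$-join can easily have $\alpha \leq 2$, so you genuinely need a hybrid argument that handles such pieces directly rather than by induction. Third, the line-graph case is not quite as immediate as you wrote: $\omega(L(H))$ can equal $\Delta(H)$ or exceed it (when $H$ has a large triangle-rich vertex), so Vizing alone gives $\chi(L(H)) \leq \Delta(H)+1$, and one must still check this is at most $2\omega(L(H))$ in all cases, which requires $\omega(L(H)) \geq 2$. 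None of these are fatal, but each is a place where the sketch would need substantive work to become a proof.
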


Remark that, the condition $\alpha(G) \geq 3$ is necessary, for otherwise, we can have $\chi(G) =\widetilde{\Omega}(\omega(G)^2)$ for claw-free graphs, contradicting the above theorem.\footnote{The notation $\widetilde{\Omega}$ hides asymptotically smaller terms.} 
Their theorem, in particular, implies that the sum-of-squares (SoS) hierarchy---a family of increasingly tight convex relaxations---gives an efficient $2$-estimation algorithm for MWIS in claw-free graphs. An obvious open question is whether the above theorem can be generalized to $k$-claw-free graphs.

\subsection{Our contributions }

Our  results are stated via our new notion of  \textit{conditional $\chi$-boundedness}. We say that $G$ is $(t,\gamma)$-conditionally $\chi$-bounded if $\chi(G) \leq \gamma \omega(G)$ whenever ${\alpha(G) \geq t}$. Moreover, a graph class is $(t,\gamma)$-conditionally $\chi$-bounded if every graph in that class is. 
The following theorem (which is a simple combination of known facts) connects SoS to an extremal question of Chudnovsky and Seymour. 

\begin{theorem}
\label{thm: connection} 
Consider a graph class that is closed under clique replacement.\footnote{A clique replacement operation on graph $G$ replaces any vertex $v$ with a clique $K_v$ of arbitrary size and connects each vertex in a clique to every neighbor of $v$. It is easy to see that $k$-claw-free graphs are closed under clique replacement.}  
If the graph class is $(t,\gamma)$-conditionally $\chi$-bounded, then $t$ rounds of SoS gives a factor $\gamma$-estimation for MWIS. 
\end{theorem}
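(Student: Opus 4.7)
The plan is to derive the theorem as a simple combination of two standard ingredients from the Lasserre/SoS literature for maximum independent set: (i)~the level-$t$ SoS relaxation is integral on any graph $H$ with $\alpha(H)<t$, and (ii)~a proper $c$-coloring of $H$, combined with the SoS clique constraints, upper bounds the SoS objective in terms of $c$ and $\alpha(H)$.

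First I would handle the weights via clique replacement. Given an instance $(G,w)$ with positive integer weights, form $G'$ by replacing each $v\in V(G)$ with a clique $K_v$ of $w_v$ copies, joined to all original neighbors of $v$. Closure of the class under clique replacement gives $G'\in$ class, and by construction $\alpha(G')=\mathrm{MWIS}(G,w)$. Symmetrizing a level-$t$ SoS pseudo-expectation for $(G,w)$ produces one on $G'$ of the same objective value (by distributing each $p_v$ uniformly over the $w_v$ copies in $K_v$), so it suffices to establish the integrality-gap bound $\mathrm{SoS}_t(G')\le \gamma\,\alpha(G')$ for unweighted MIS on $G'$.

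I would then case-split on $\alpha(G')$. If $\alpha(G')<t$, ingredient (i) gives $\mathrm{SoS}_t(G')=\alpha(G')\le \gamma\,\alpha(G')$. Otherwise $\alpha(G')\ge t$, so conditional $\chi$-boundedness applied to $G'$ yields $\chi(G')\le \gamma\,\omega(G')$. The strategy is then to leverage a proper coloring $V(G')=I_1\cup\cdots\cup I_c$ with $c\le \gamma\,\omega(G')$ together with the SoS clique inequalities $\sum_{v\in Q}p_v\le 1$ (valid for every clique $Q$ of $G'$) to conclude $\mathrm{SoS}_t(G')\le \gamma\,\alpha(G')$.

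I expect the principal obstacle to lie in this second case. A naive decomposition of the SoS objective $\sum_v p_v$ across the color classes $I_j$ only yields $\mathrm{SoS}_t(G')\le c\cdot\alpha(G')\le \gamma\,\omega(G')\,\alpha(G')$, which is off by a factor $\omega(G')$. Shaving this extra $\omega(G')$ factor is the crux: one has to combine the coloring with the SoS clique inequalities so as to produce, in effect, a fractional clique-cover certificate of total weight at most $\gamma\,\alpha(G')$. This is the quantitative instance of the $\chi$-boundedness/QSTAB-integrality-gap equivalence highlighted earlier in the introduction and in~\cite{chalermsook2016note}, and using clique replacement in both directions (to encode vertex weights and to pass between $G$ and its blow-ups in the class) is what makes the transition from the coloring bound to the SoS bound work.
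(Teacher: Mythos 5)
Your overall case split ($\alpha<t$ gives SoS-exactness, $\alpha\ge t$ routes through $\chi$-boundedness and clique inequalities) matches the structure of the paper's proof, and you correctly identify the two ingredients. However, the argument as written does not close the second case, and in fact takes a detour that cannot close it.

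Two specific issues. First, the reduction from weighted to unweighted via the blow-up $G'$ is unnecessary: the paper proves the bound directly for the weighted relaxation $\max\{\sum_v w_v z_v : z\in \mathrm{SoS}_t(K_G)\}$, and introducing $G'$ only adds the obligation (which you gloss over) to verify that the symmetrized pseudo-expectation on $G'$ is feasible and preserves the objective. Second, and more importantly, the ``naive decomposition'' you set up over the color classes $I_1,\dots,I_c$ is the wrong decomposition and cannot be repaired by ``shaving an $\omega$ factor''. The bound $\sum_{v\in I_j}p_v\le \alpha(G')$ uses only $p_v\le 1$ and never touches the clique inequalities, so it does not encode any of the convex-relaxation structure you need; the clique inequalities bound sums over \emph{cliques}, not over the color classes (which are independent sets), so covering $V$ by independent sets is the wrong combinatorial object to pair them with. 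The correct route, and what the paper does, is: (a) project the optimal SoS solution to a point $z\in\qstab(G)$ (this is Proposition~\ref{prop: SoS and QSTAB}, which you invoke implicitly but should state, since it is where $t\ge 2$ is used), and then (b) apply Theorem~1 of~\cite{chalermsook2016note} directly to $z$. The proof of that cited theorem is where the coloring and clique replacement actually interact: one blows up $G$ according to the \emph{fractional solution} $z$ (not the weights), uses that cliques in the blow-up have size at most the common denominator $N$ because $z$ satisfies the clique constraints, applies $\chi\le\gamma\omega$ to the blow-up, and projects the $\gamma N$ color classes down to independent sets of $G$ whose total weight is $N\sum_v w_v z_v$. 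You gesture at this at the end, but the middle of your argument (decomposing the SoS objective over a coloring of $G'$) is a different, dead-end computation; you should replace it with the project-to-QSTAB step and a clean citation of the rounding theorem, as the paper does.
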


In particular, if $t=O(1)$, then SoS gives a $\gamma$-estimation algorithm in polynomial time. 
Our first main contribution is to rule out this approach, i.e., refuting the possibility to generalize Theorem~\ref{thm:CS}. We show the following result (please refer Theorem~\ref{thm:gen cntegcsgen} for a precise statement).

\begin{theorem}[Simplified] 
\label{thm:cntegcsgen}
For every $k \ge 4$, there exists an infinite family of graphs $\{G_n\}$ such that $G_n$ is a  connected $k$-claw-free graph on $n$ vertices with $\alpha(G_n) = \Omega\left(\frac{n}{\log n}\right)$ and $\chi(G) \ge  f(k) \left(\frac{\omega(G)}{\log \omega(G)}\right)^{k/2}$, for some function $f$.
\end{theorem}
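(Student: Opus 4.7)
The plan is to build $G_n$ as a ``chain'' of many vertex-disjoint copies of a small Ramsey-type base graph, linked by a linear number of carefully chosen bridge edges, so that the local claw condition is preserved while the global independence number remains large.

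First, I would fix $m = \Theta(\log n)$ and invoke Spencer's off-diagonal Ramsey lower bound $R(k-1, s) = \Omega((s/\log s)^{k/2})$ to exhibit a connected graph $H_0$ on $m$ vertices with $\alpha(H_0) \le k - 2$ and $\omega(H_0) = O(m^{2/k} \log m)$. Because $\alpha(H_0) \le k-2$, every neighborhood in $H_0$ also has independence at most $k-2$, so $H_0$ is in fact $(k-1)$-claw-free; moreover $\chi(H_0) \ge m/(k-2) = \Omega(\log n)$.

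Next, I would take $p = \lfloor n/m \rfloor$ disjoint copies $H_1, \ldots, H_p$ of $H_0$, choose in each $H_i$ two distinct ``bridge'' vertices $a_i, b_i$, and add the chain of bridge edges $\{b_i a_{i+1}\}_{i=1}^{p-1}$ (tidying up with at most $m$ extra pendants to reach exactly $n$ vertices). For a non-bridge vertex $v$ we have $\alpha(N_{G_n}(v)) = \alpha(N_{H_i}(v)) \le k-2$, while a bridge vertex acquires exactly one non-adjacent extra neighbor in another component, so $\alpha(N_{G_n}(v)) \le (k-2) + 1 = k-1$. Hence $G_n$ is $k$-claw-free, and it is connected by construction.

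The parameters now fall out: $\alpha(G_n) \ge p(k-2) = \Omega(n/\log n)$ by choosing $k-2$ independent vertices in each $H_i$ away from $\{a_i, b_i\}$; $\omega(G_n) = \omega(H_0) = O((\log n)^{2/k} \log\log n)$; and $\chi(G_n) \ge \chi(H_0) = \Omega(\log n)$. A short calculation gives $(\omega(G_n)/\log \omega(G_n))^{k/2} = O(\log n)$, because the $\log\omega$ in the denominator absorbs the extra $\log\log n$ appearing in $\omega$. Therefore $\chi(G_n)/(\omega(G_n)/\log \omega(G_n))^{k/2} = \Omega_k(1)$, and setting $f(k)$ to this $k$-dependent constant yields the claimed inequality. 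The main delicate point is calibrating the Ramsey strength of the base graph against the slack needed for bridge edges: a sharper base with $\alpha(H_0) = k-1$ (giving $\omega \sim m^{2/(k-1)}$) would saturate the neighborhood independence, so that even a single extra bridge edge would violate $k$-claw-freeness; the mild Ramsey loss from insisting $\alpha(H_0) \le k-2$ is precisely what ensures each bridge vertex has room to absorb one extra cross-component neighbor.
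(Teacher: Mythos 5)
Your proposal is correct, but it takes a genuinely different and substantially simpler route than the paper does. Both arguments start from the same ingredient — a $(k-1,t)$-Ramsey graph $H$ (so $\alpha(H)\le k-2$) on $m=\Theta(\log n)$ vertices — and both chain together $p=\Theta(n/\log n)$ copies of it so that $\omega$ stays small while $\alpha$ grows linearly in $p$ and $\chi$ grows like $m$. The paper, however, builds $G$ as a \emph{conflict graph on matching edges}: each block consists of two copies of $H$ and an auxiliary clique $K_{\tau-1}$, blocks are joined through the cliques, and $V(G)$ is the set of matching edges with adjacency given by ``not an induced matching.'' Its $k$-claw-freeness argument (Lemma~\ref{lm:prop G}) proceeds by a case analysis over whether the central matching edge touches a $Q$-copy or lies between two $H$-copies, organized around the bi-conflict composition. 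Your construction avoids all of that: you keep the vertices of the Ramsey copies and add a single bridge edge per consecutive pair, then note that $\alpha(H)\le k-2$ forces $\alpha(N_H(v))\le k-2$ for every $v$, so each bridge vertex — which gains exactly one extra neighbour, non-adjacent to the rest of its neighbourhood — still has $\alpha(N_G(v))\le k-1$. That one-line neighbourhood argument replaces the paper's entire structural analysis, and the final arithmetic (absorbing the $\log\log n$ in $\omega$ via the $\log\omega$ denominator, then comparing with $\chi(G_n)\ge\chi(H_0)=\Omega_k(\log n)$) checks out.

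A few small points worth tightening. First, $\alpha(G_n)\ge p(k-2)$ is not justified as stated, since a $(k-1,\cdot)$-Ramsey graph is only guaranteed $\alpha\le k-2$, and even when equality holds the maximum independent set need not avoid $\{a_i,b_i\}$; but $\alpha(G_n)\ge p$ is immediate (one non-bridge vertex per copy) and already gives $\Omega(n/\log n)$. Second, when padding with pendants you must attach them one apiece to \emph{non-bridge} vertices, otherwise a bridge vertex could acquire a $k$-th independent neighbour and create a $k$-claw; equivalently, just take $n=pm$ since the theorem only requires infinitely many $n$. Third, the side remark that an $\alpha(H_0)=k-1$ base would give $\omega\sim m^{2/(k-1)}$ has the wrong exponent — a $(k,\cdot)$-Ramsey graph yields $\omega\sim m^{2/(k+1)}$ — but the underlying point (that the extra slack from $\alpha\le k-2$ is exactly what licenses the bridge edges) is correct and is precisely the same slack the paper exploits, only packaged differently.
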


This lower bound almost matches the upper bound provided by~\cite{JedYang}. 
We remark that both upper and lower bounds are tight w.r.t. the state-of-the-art bounds on Ramsey number, that is, improving in either direction requires asymptotically improving the bound on the Ramsey number of a graph. 

Given this theorem, the next obvious open question is whether there is a lower bound on the performance of SoS that ``separates'' SoS from the simple greedy {factor-$k$ approximation} algorithm. While we do not manage to settle this question,  we instead show {in the theorem below} that our construction for Theorem~\ref{thm:cntegcsgen} can be used to construct a bad example for {a} Sherali--Adams strengthening of ${\sf QSTAB}(G)= \{x \in [0,1]^{|V(G)|}: \sum_{i \in Q} x_i \leq 1 (\forall ~\mbox{clique } Q)\}$.

\begin{theorem} [Integrality gap of Sherali--Adams on \qstab] \label{thm:igsap}
  Let $k \ge 4$.  
  For $0 < \epsilon \le \nicefrac{1}{3}$,
  there exists an infinite family of graphs $\{G_n\}$ such that $G_n$ is a connected $k$-claw-free graph on $n$ vertices with
  \footnote{The notations $O_k, \Theta_k, \Omega_k$ hide multiplicative functions in $k$.}
  $\alpha(G_n) = \Omega(n^\epsilon)$ and
  the integrality gap of Sherali--Adams hierarchy on $\qstab(G_n)$ is $\Omega_k(n^{\epsilon})$, even  after $\Omega_k(n^{1-2\epsilon})$ rounds.
\end{theorem}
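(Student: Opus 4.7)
The plan is to construct $G_n$ as a clique blow-up of a Ramsey-tight $k$-claw-free building block and to certify an SA lower bound via a tensor pseudo-distribution.

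\textbf{Construction.} Take $H$ to be a $4$-claw-free graph on $N$ vertices with $\omega(H) = \alpha(H) = R$ and $N = \Theta(R^3)$, which is extractable from the proof of Theorem~\ref{thm:cntegcsgen}; since any $4$-claw-free graph is $k$-claw-free for every $k \ge 4$, a single family suffices. Let $G_n = H[s]$ be the clique blow-up obtained by replacing each $v \in V(H)$ by a clique $K_v$ of size $s$ and each edge $uv \in E(H)$ by a complete bipartite graph between $K_u$ and $K_v$; this preserves $k$-claw-freeness, fixes $\alpha$, and scales $\omega$ by $s$. Setting $R = \lceil n^\epsilon \rceil$ and $s = \lfloor n^{1-3\epsilon}\rfloor$ (each at least $1$ precisely when $\epsilon \le 1/3$) yields $n \approx sN$, $\alpha(G_n) = R = \Theta(n^\epsilon)$, and $\omega(G_n) = sR = \Theta(n^{1-2\epsilon})$. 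Since $H$ can be assumed connected, so is $G_n$.

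\textbf{Fractional solution.} The uniform assignment $x_v = 1/\omega(G_n)$ is $\qstab(G_n)$-feasible with objective $n/\omega(G_n) = \Theta(n^{2\epsilon})$, so its ratio with $\alpha(G_n) = \Theta(n^\epsilon)$ is already $\Omega(n^\epsilon)$. It remains to show that $x$ survives $r = \Omega_k(n^{1-2\epsilon})$ rounds of Sherali--Adams.

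\textbf{SA extension.} I would define a pseudo-moment vector $\{y_S\}_{|S|\le r+1}$ by $y_S = \omega(G_n)^{-|S|}$ when $S$ is \emph{admissible}---no two vertices of $S$ share a blow-up clique and the projection $\pi(S) \subseteq V(H)$ is an independent set of $H$---and $y_S = 0$ otherwise. Singleton marginals reproduce $x$. For any admissible $T$ with $|T| \le r$ and any clique $Q$ of $G_n$ (whose projection $Q_H$ is a clique in $H$), a vertex $u \in Q$ contributes to $\sum_u y_{T \cup \{u\}}$ only if $\pi(u) \in Q_H$ is non-adjacent in $H$ to $\pi(T)$ and $\pi(u) \notin \pi(T)$; the number of such free $u$ is at most $|Q_H| \cdot s \le R \cdot s = \omega(G_n)$, so $\sum_{u \in Q \setminus T} y_{T \cup \{u\}} \le \omega(G_n) \cdot y_T/\omega(G_n) = y_T$, verifying all lifted $\qstab$ constraints.

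\textbf{Main obstacle.} The principal hurdle is to verify the remaining SA consistency conditions, namely that the inclusion-exclusion moments $\sum_{S \subseteq U \subseteq T}(-1)^{|U \setminus S|}\, y_U$ are non-negative for every $T$ with $|T| \le r+1$. Because $x \notin \mathrm{STAB}(G_n)$ (its total mass $n^{2\epsilon}$ strictly exceeds $\alpha(G_n) = n^\epsilon$), the pseudo-moment vector $y$ cannot be realized as a genuine distribution over independent sets, and non-negativity must arise from delicate cancellations among admissible subsets. The quantitative bound $r = \Omega_k(n^{1-2\epsilon})$ ultimately reflects that after fixing any set of $\le r$ vertices, the $H$-level restriction still admits many extensions to large independent sets of $H$, keeping the signed sums controlled. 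Formalizing this counting relies crucially on the symmetry and Ramsey-tightness of $H$ furnished by the construction from Theorem~\ref{thm:cntegcsgen}, and is the most technical step of the argument.
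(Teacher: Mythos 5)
Your proposal diverges from the paper both in the graph construction and, crucially, in the Sherali--Adams pseudo-solution; the second divergence is where a genuine gap lies.

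On the construction: the paper uses the graph $G_n$ from Lemma~\ref{lm:prop G} directly (Ramsey blocks joined by matchings), obtaining $\alpha(G_n) = \Theta_k(n^\epsilon)$ and a much smaller clique number $\omega(G_n) = O_k(n^{(1-\epsilon)/2}\log n)$, whereas your clique blow-up yields $\omega(G_n) = \Theta(n^{1-2\epsilon})$. This difference is harmless because the paper sets the number of rounds $\ell = \Theta_k(n^{1-2\epsilon})$ anyway, so $\omega + \ell$ ends up the same order in both cases. (Your blow-up step is exactly the clique-replacement operation the paper uses elsewhere, so $k$-claw-freeness is indeed preserved; fine.)

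On the SA certification, however, you have a real gap, and you yourself flag it: you propose the product-like pseudo-moments $y_S = \omega(G_n)^{-|S|}$ on admissible sets, but then concede that verifying the inclusion--exclusion non-negativity $\sum_{S\subseteq U\subseteq S\cup T}(-1)^{|U\setminus S|}y_U \ge 0$ is ``the most technical step'' and do not carry it out. Because your $y$ restricted to independent sets is \emph{not} the moment sequence of any actual distribution (as you note, the marginal mass exceeds $\alpha$), non-negativity would have to come from a delicate cancellation argument that is not at all automatic; product-like moments truncated to independent sets can and do violate these conditions once $|T|$ approaches the desired number of rounds. So as written, the proof does not go through.

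The paper sidesteps this entirely with a much more modest construction (Lemma~\ref{lm:feasiblesa solution}): set $\hat{y}_\emptyset = 1$, $\hat{y}_{\{i\}} = \frac{1}{\omega(G)+\ell}$, and $\hat{y}_A = 0$ for all $|A|\ge 2$. With all higher moments zeroed out, every inclusion--exclusion sum $J_{S,T}(\hat{y})$ collapses to one of three trivial cases: $0$ if $|S|\ge 2$, $\hat{y}_{\{a\}} \ge 0$ if $S = \{a\}$, and $1 - |Q\cup T|/(\omega(G)+\ell) \ge 0$ if $S = \emptyset$ (since $|Q\cup T| \le \omega(G)+\ell$). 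The ``price'' of shrinking the singleton weight from $1/\omega$ to $1/(\omega+\ell)$ is exactly what makes the $S=\emptyset$ case work for $\ell$ rounds, and the resulting objective $n/(\omega+\ell)$ still gives gap $\Omega_k(n^\epsilon)$ once $\ell = \Theta_k(n^{1-2\epsilon})$ and $\omega = O(\ell)$. This is the key idea your proposal misses: for this hierarchy one does not need a ``realistic'' pseudo-distribution; the trivial singleton-supported solution with appropriately deflated weights already certifies the lower bound with essentially no work.
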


We remark that this theorem can be contrasted with special cases of $k$-claw-free graphs, e.g., in the bounded degree setting~\cite{bansal2015lovasz}, where the (poly-logarithmic rounds of) Sherali--Adams strengthening of QSTAB provides an optimal approximation factor under the unique games conjecture (UGC)~\cite{austrin2011inapproximability}.  

\paragraph{Discussion of previous work on convex programs.}
Let us compare our result of Theorem~\ref{thm:igsap} with the bounds of Chan and Lau~\cite{doi:10.1137/1.9781611973075.122}. In particular, ~\cite{doi:10.1137/1.9781611973075.122} considers \ksp, a special case of maximum independent set in $(k+1)$-claw-free graph, where we are given a $k$-uniform hypergraph $H$ on $n$ vertices and we are asked to find a maximum matching in $H$. In their work,~\cite{doi:10.1137/1.9781611973075.122} show that Sherali--Adams on the standard LP for \ksp has integrality gap of at least $k-2$, even after $\Omega_k(n)$ rounds. Additionally, they show that, for constant $k$, \qstab for \ksp can be captured by a polynomial size LP and has integrality gap of at most $(k+1)/2$. In contrast, our result of Theorem~\ref{thm:igsap} is for more general problem of maximum independent set in $k$-claw-free graphs, and yields integrality gap of $\Omega_k(n^\epsilon)$, which is a function of $n$, for Sherali--Adams on \qstab with rounds $\Omega_k(n^{1-2\epsilon})$, which is a stronger program than that of~\cite{doi:10.1137/1.9781611973075.122}. However, we remark that, the parameters of Theorem~\ref{thm:igsap} can be adjusted to yield an integrality gap of $g(k)$, for any function $g$, for Sherali--Adams on \qstab with $\Omega_k(n)$ number of rounds, which is linear in $n$ (refer Appendix~\ref{sec:ig_linear_sa} for more details). Thus, our results yield a larger integrality gap for a stronger program for maximum independent set in $k$-claw-free graphs compared to that of~\cite{doi:10.1137/1.9781611973075.122}.

\paragraph{Overview of Techniques:} We give a high-level overview for the proof of Theorem~\ref{thm:cntegcsgen}. 
For simplicity, let us focus on the case of $k=5$ and a slightly weaker bound. The first component of our construction is a Ramsey graph. It is known that there exists an $n$-vertex graph $H$ such that $\alpha(H) = 2$ and there is no clique of size $c \sqrt{n \log n}$ for some constant $c$~\cite{kim}. Therefore this graph is $3$-claw-free (since there is no independent set of size $3$) and $\chi(G) \geq n/\alpha(G) =n/2 \geq \widetilde{\Omega}(\omega(G)^{2})$. 
So, this graph has almost all our desired properties except that $\alpha(H)$ is very small while we need large independent sets. We will ``compose'' several copies of $H$ together to obtain a final graph $G$, ensuring that (i) $\alpha(G)$ can be made arbitrarily large, (ii) $\omega(G)$ is roughly the same as $\omega(H)$, and (iii) the graph remains $5$-claw-free.

The key component of our composition step is a special graph operation that we call \emph{bi-conflict composition}. 
In particular, given any two graphs $H_1, H_2$ having $n$ vertices each, the graph $D = bcc(H_1, H_2)$ is obtained as follows. First, construct a graph $D'$ by connecting $H_1$ and $H_2$ by an arbitrary matching  $M\colon |M| = n$. Next, define $D$ where $V(D) = M$ and for each $e, e' \in M$, we have $(e, e') \in E(D)$ if and only if they are not induced matching edges in $D'$. 
There are two simple properties of graph $D$ that are crucial to our analysis.  
\begin{itemize}
    \item (P1) the maximum claw in $D$ is at most $\min \{\alpha(H_1), \alpha(H_2)\}$. 

    \item (P2) the maximum clique in $D$ is $\omega(D) \leq \omega(H_1) + \omega(H_2) + 1$. 
\end{itemize}
The bi-conflict composition will be used as an analytical tool in our construction. Take $q$ copies of Ramsey graphs $H_1, H_2,\ldots, H_q$ on $n$ vertices. For each $i=1,\ldots, q-1$, connect $H_i$ to $H_{i+1}$ by a matching $M_i\colon |M_i| = n$. Call this graph~$G'$. Our final graph $G$ has vertices $V(G) = \bigcup_i V(M_i)$ and the edges are defined such that $(e,e') \in E(G)$ if $e$ and $e'$ form an induced matching. 
Notice that the structure of graph $G$ for vertices that correspond to $M_i$-edges is roughly the same as $bcc(H_i, H_{i+1})$, which allows us to invoke properties (P1) and (P2).  

Notice that $\alpha(G) = \Omega(q)$ (Pick one matching edge from each $M_i$ when $i$ is odd). Property (P1) guarantees that the maximum claw is at most $4$ (therefore $G$ is $5$-claw-free) and property (P2) guarantees that the value of maximum clique is at most $O(\sqrt{n \log n})$. 
By choosing $q$ to be sufficiently large, we would be done. 
We remark that this construction gives a non-trivial lower bound (albeit weaker than Theorem~\ref{thm:cntegcsgen}) for $k\geq 5$. To make the result work for $k=4$, we need to compose the copies of the Ramsey graph more carefully.

\subsection{Conclusion and open problems}

Our main contribution is to initiate the study of convex relaxation aspects of independent set in $k$-claw-free graphs. The main open question is whether there is any convex relaxation approach that gives a reasonable approximation guarantee for MWIS in $k$-claw-free graphs (for $k \geq 4$). 
Conceptually, we made explicit the implication of Chudnovsky and Seymour's theorem (which can be seen as conditional $\chi$-boundedness), that SoS gives a reasonable approximation guarantee for claw-free graphs. 
We refute the possibility of  generalizing such a result to $k$-claw-free graphs for all $k \geq 4$ and present evidence that this graph family might be a bad instance for SoS.

\subsection{Further related work}


A graph class is said to be $\chi$-bounded if $\chi(G)$ can be upper bounded by  $f(\omega(G))$ where $f$ is some function. The concept of $\chi$-boundedness has been studied extensively in graph theory (see, e.g.,the survey~\cite{scott2020survey} and references therein). 
 In algorithms and optimization, we are mostly concerned with $\chi$-boundedness where $f$ is a linear or close to linear function. Roughly, the ratio  $\chi(G)/\omega(G)$ captures the integrality gap of a convex programming relaxation $\qstab$ in estimating the value of maximum independent set of a graph~\cite{chalermsook2016note}.

As for approximating the maximum weight of the independent set of $k$-claw-free graphs, a local search algorithm due to Berman \cite{Berman00} had remained the best known approximation algorithm for the past two decades. Berman's algorithm achieved a factor of $\frac{k}{2} + \epsilon$ in polynomial time. In 2021, Neuwohner \cite{Neuwohner21} broke the barrier of $\frac{k}{2}$ by improving the factor to $\frac{k}{2} - \frac{1}{63,700,992}$. In a very recent work, Thiery and Ward \cite{TW23} improved the factor to $\frac{k}{2} - \delta_k$ for a constant $\delta_k \geq 0.214$. On the other hand, the problem is NP-hard to approximate better than a factor of $\bigOmega{k / \log k}$ \cite{hazan2006complexity}\footnote{In fact, the hardness even holds for a special case of the problem, namely the unweighted $k$-set packing problem.}.

With respect to convex programs, Chan and Lau~\cite{doi:10.1137/1.9781611973075.122} study the power of standard LP and Sherali--Adams on \ksp, which is a special case of maximum independent set in $(k+1)$-claw-free graph. In \ksp, we are given a $k$-uniform hypergraph $H$ on $n$ vertices and we are asked to find a maximum matching in $H$. In their work,~\cite{doi:10.1137/1.9781611973075.122} show that Sherali--Adams on the standard LP for \ksp has integrality gap of at least $k-2$, even after $\Omega(n/k^3)$ rounds. Additionally, they show that, for constant $k$, \qstab for \ksp can be captured by a polynomial size LP and has integrality gap of at most $(k+1)/2$.

\paragraph{Organization.} Section~\ref{sec:prelim} explains the basic graph-theoretic terminologies. Our graph theoretic result is proved in Section~\ref{sec:graphs}.
All convex relaxation results, as well as the connection with the notion of conditional $\chi$-boundedness, are proved in Section~\ref{sec:convex}.

\section{Preliminaries}
\label{sec:prelim} 
We follow standard graph theoretic notation. 
Given a graph $G=(V(G),E(G))$,  $M \subseteq E(G)$ is a \textit{matching} if no pair of edges in $M$ share a vertex.
Further, a matching $M \subseteq E(G)$ is said to be an \textit{induced matching} if it is an induced subgraph of $G$.
Finally, for a matching $M$, 
$e_i \ne e_j \in M$ is said to be an \textit{intersecting matching pair} if $e_i$ and $e_j$ do not form an induced matching; and $M$ is an \emph{intersecting matching} if every pair in $M$ is an intersecting matching pair.

For $k \ge 3$, a \textit{$k$-claw} is the graph $K_{1,k}$.
For a $k$-claw $T$, the vertex with degree $k$ is called the \textit{central vertex}
of $T$, and the remaining vertices of degree one are called \textit{leaves} of $T$.
 A graph $G$ is said to be \textit{$k$-claw-free} if there exists no $k$-claw as an induced subgraph. 
 For a graph $G$, $\alpha(G)$ is the size of maximum independent set in $G$,  $\omega(G)$ is the size of maximum clique in $G$, and $\chi(G)$ is the  chromatic number of $G$. For weighted case, $\alpha(G)$ and $\omega(G)$ represent the maximum weight independent set and maximum weight clique in $G$ respectively.

The notations $O_k, \Theta_k, \Omega_k$ hide multiplicative functions in $k$.

In the \textit{maximum-weight independent set} problem (MWIS), we are given a graph $G$ together with weights $\{w_v\}_{v \in V(G)}$ on the vertices, and our goal is to find an independent set $S \subseteq V(G)$ with maximum total weights.

\section{Graph theoretic result}
\label{sec:graphs} 

In this section we prove the following theorem.
\begin{theorem} \label{thm:gen cntegcsgen}
   For $k \ge 4$, there exists $n_0$ depending on $k$ such that for infinitely many $n \ge n_0$, there exists a connected $k$-claw-free graph $G$ on $n$ vertices with $\alpha(G) = \Omega\left(\frac{n}{\log n}\right)$ and $\chi(G) \ge f(k) \left(\frac{\omega(G)}{\log \omega(G)}\right)^{k/2}$, for some $f(k)$.
\end{theorem}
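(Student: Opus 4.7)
The plan is to follow the bi-conflict chain blueprint the overview describes for $k = 5$, tuning the Ramsey building block to the exponent $k/2$. The building block is a Ramsey graph $H$ on $m$ vertices with $\alpha(H) = t$ and $\omega(H) = O_k(m^{1/t}(\log m)^{(t-1)/t})$, which exists by classical lower bounds on $R(s, t+1)$ (Kim's theorem for $t = 2$, and the Ajtai--Koml\'os--Szemer\'edi bound $R(s, t+1) = O(s^t/\log^{t-1} s)$ for general $t$). I would take $q$ copies $H_1, \ldots, H_q$ of $H$, join consecutive ones by an arbitrary perfect matching $M_i$ to form an auxiliary graph $G'$, and define $G$ to have vertex set $\bigcup_i M_i$ (the matching edges), with two matching edges adjacent in $G$ iff they fail to form an induced matching in $G'$. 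The intended parameter choice is $t$ of order $k/2$ and $m = \Theta_k(\log q)$.

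Three properties need to be verified. Picking one matching edge from each odd-indexed $M_i$ gives an induced matching in $G'$ of size $\lceil q/2 \rceil$, hence an independent set in $G$, proving $\alpha(G) = \Omega(q)$; with $m = \Theta_k(\log q)$ this equals $\Omega(n/\log n)$ for $n = (q-1)m$. Any two matching edges in strips at least two apart automatically form an induced matching in $G'$, so any clique of $G$ concentrates in two consecutive strips, and property (P2) of the bi-conflict composition then bounds $\omega(G) = O_k(\omega(H))$. Connectedness follows because consecutive strips $M_i, M_{i+1}$ always contain two edges sharing an $H_{i+1}$-vertex. For $k$-claw-freeness, examine a putative $k$-claw centred at $e_0 = (x_i, x_{i+1}) \in M_i$: its leaves lie in $M_{i-1} \cup M_i \cup M_{i+1}$, and adjacency to $e_0$ forces each leaf to meet $N_{H_i}[x_i]$ or $N_{H_{i+1}}[x_{i+1}]$, while the induced-matching requirement among leaves pins those endpoints into independent sets of the two neighborhoods, capping the total leaf count at roughly $2t$. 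Substituting the Ramsey bound on $\omega(H)$ together with $\chi(G) \ge n/\alpha(G) = \Omega_k(m)$ then yields $\chi(G) \ge f(k)(\omega(G)/\log\omega(G))^{k/2}$ after algebra, provided $t$ is chosen in the right range.

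The main obstacle is the interplay between the claw budget $2t \le k - 1$ (forcing $t \le \lfloor (k-1)/2 \rfloor$) and the $\chi$-to-$\omega$ ratio (which benefits from $t$ as large as possible). For $k \ge 5$ the two constraints admit a compatible choice and the chain already goes through, albeit --- as the overview notes --- with a weaker exponent than the theorem states, so an additional tightening (for example iterating the bi-conflict composition or using a more structured Ramsey base) is needed to hit $k/2$ exactly. The critical case is $k = 4$, where the constraints collide: the chain forces $t \le 1$, reducing $H$ to a complete graph and being useless. The remedy, flagged in the overview as \emph{more careful composition}, is to start instead from a Ramsey graph with $\alpha(H) = 3$ (already $4$-claw-free on its own) and to glue many copies via a bridge gadget that routes every extra neighbor through a clique-like structure, thereby preventing the appearance of a fourth independent neighbor at any vertex. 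The delicate technical point is to carry out this gluing while preserving the Ramsey-extremal $\omega$ bound on $H$.
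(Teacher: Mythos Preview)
Your proposal is an outline rather than a proof, and you explicitly leave open the two points that actually need an idea: the exponent for $k\ge 5$ stops at roughly $(k-1)/2$ instead of $k/2$, and $k=4$ fails outright. The paper does not patch these separately; it uses one construction that handles all $k\ge 4$ and hits the exponent $k/2$ directly. The missing ingredient is the following.

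The paper takes a $(k-1,\tau)$-Ramsey graph $H$ (so $\alpha(H)\le k-2$, not $\lfloor(k-1)/2\rfloor$) and builds each \emph{block} $B_i$ from two copies $H_i^1,H_i^2$ joined by a perfect matching, together with a clique $Q_i=K_{\tau-1}$ matched to $\tau-1$ vertices of $H_i^2$; consecutive blocks are connected only through these cliques, by matching $Q_i$ to $\tau-1$ vertices of $H_{i+1}^1$. The final graph $G$ is again the intersecting-matching graph on the matching edges of this auxiliary structure. The clique bridge is exactly the device you gesture at for $k=4$, but it is used for every $k$, and its effect is to cap the number of claw leaves escaping a block to at most one (any two bridge edges through the same $Q_i$ are automatically adjacent in $G$). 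Hence the claw analysis reduces to a single copy of $H$: at most $k-2$ leaves can coexist there, plus at most one leaf through a clique bridge, giving $k$-claw-freeness with the much larger value $\alpha(H)=k-2$. Plugging this into $R(k-1,t)\ge c_k\, t^{k/2}(\log t)^{-k/2+o(1)}$ yields $|V(H)|=\Theta_k\bigl((t/\log t)^{k/2}\bigr)$ and hence $\chi(G)\ge n/\alpha(G)=\Omega_k(|V(H)|)=f(k)(\omega(G)/\log\omega(G))^{k/2}$, since $\omega(G)\le 3\tau$.

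Two specific corrections to your $k=4$ remark: the right Ramsey parameter is $\alpha(H)\le k-2=2$ (a $(3,t)$-Ramsey graph), not $\alpha(H)=3$; and the clique bridge is not a special fix for $k=4$ but the mechanism that lifts the exponent from your chain bound to $k/2$ for every $k$. Your ``iterating the bi-conflict composition'' suggestion would not help, since iterating preserves the $2\alpha(H)\le k-1$ bottleneck.
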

To this end, we will use  known results about Ramsey graphs, which we introduce in Section~\ref{sec:ramsey}. 
We explain our construction and analysis in Sections~\ref{sec: construction}. 

\subsection{Ramsey graphs}
\label{sec:ramsey} 

Let $R(s,t)$, for $s \ge 3$ denote the Ramsey number, i.e., $R(s,t)$ is the minimum number such that any graph on $R(s,t)$ vertices has either an independent set of size $s$ or a clique of size $t$. 
\begin{theorem}[~\cite{kim,BOHMAN20091653,bohman2010early}] \label{thm:ramsery lower bound}
    For any $s \ge 3$ and for sufficiently large $t$, $R(s,t) \ge c'_s \cdot t^{\frac{s+1}{2}} (\log t)^{\frac{1}{s-2} - \frac{s+1}{2}}$, for some constant $c'_s$ depending only on $s$.
\end{theorem}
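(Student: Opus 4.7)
The plan is to prove the lower bound via the probabilistic method, specifically by analyzing the random $K_s$-free graph process, following Kim~\cite{kim} for $s=3$ and Bohman--Keevash~\cite{bohman2010early} for $s \geq 4$. On vertex set $[N]$ the process begins with the empty graph and, at each step, adds an edge chosen uniformly at random from the set of pairs whose insertion does not complete a copy of $K_s$; it halts when this set becomes empty. Let $G$ denote the terminal graph. By construction $\omega(G) < s$, so the complement $\bar{G}$ has $\alpha(\bar{G}) < s$. If in addition the process produces a graph with $\alpha(G) < t$, then $\bar{G}$ certifies $R(s,t) > N$, and choosing $N$ as large as possible subject to this constraint will yield the claimed bound.

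The central technical task is therefore to control $\alpha(G)$ with positive probability. Specifically, I would aim to show $\alpha(G) = O\bigl(N^{2/(s+1)} (\log N)^{c_s}\bigr)$, where $c_s$ is selected so that setting $t$ slightly above this quantity and inverting to solve for $N$ in terms of $t$ yields the exponent $\tfrac{1}{s-2} - \tfrac{s+1}{2}$ on $\log t$. The engine for this is the \emph{differential equation method} applied to the process: parametrize time continuously so that roughly $\tau \cdot N^{2 - 2/(s+1)}$ edges are present at time $\tau$, and simultaneously track a hierarchy of random variables, including the number $Q(\tau)$ of open (still-insertable) pairs, vertex degrees, codegrees, and more generally the number of extensions of each small rooted $K_s$-free configuration. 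A coupled martingale argument, using Freedman's inequality on the one-step increments, shows that each tracked variable stays within a $(\log N)^{O(1)}$ multiplicative factor of its deterministic trajectory up to the terminal time $\tau_{\max} \sim (\log N)^{1/(s-2)}$.

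Given this trajectory control, the bound on $\alpha(G)$ follows from a union bound over candidate independent sets. For any fixed vertex set $U$ of the claimed size, the extension-count estimates ensure that $U$ retains many open internal pairs throughout the process; the probability that none of them is ever selected as a process edge decays super-polynomially in $|U|$, and this decay beats the $\binom{N}{|U|}$ factor from the union bound.

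The main obstacle is the simultaneous concentration of the many coupled random variables; their fluctuations can in principle reinforce each other as the process evolves. The standard remedy is a \emph{self-correcting} stopping-time argument: one halts the process the first time any tracked quantity strays too far from its predicted trajectory, and shows that, conditional on no prior failure, each quantity behaves as a supermartingale with small one-step increments, so Freedman's inequality applies uniformly. Matching the precise log-exponent $\tfrac{1}{s-2} - \tfrac{s+1}{2}$ requires tuning the error budget in these bounds to the exact growth rate of extension counts; this is the calculation carried out in~\cite{bohman2010early}, with~\cite{BOHMAN20091653} supplying the simpler $s = 3$ case.
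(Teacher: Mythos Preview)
The paper does not supply its own proof of this statement: Theorem~\ref{thm:ramsery lower bound} is quoted directly from the cited references~\cite{kim,BOHMAN20091653,bohman2010early} and used as a black box (via Corollary~\ref{cor:ramsey graphs}) in the subsequent construction. There is therefore nothing in the paper to compare your argument against.

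That said, your sketch is an accurate high-level account of what those references actually do. Kim~\cite{kim} handles $s=3$ via a semi-random (nibble) construction, Bohman~\cite{BOHMAN20091653} re-derives the $s=3$ bound by analyzing the triangle-free process, and Bohman--Keevash~\cite{bohman2010early} extend the $H$-free process analysis to all $s \ge 4$; the differential-equation tracking of extension counts together with a union bound over candidate independent sets is exactly the mechanism. Two small points worth flagging: (i) for $s=3$ the exponent $\tfrac{1}{s-2}-\tfrac{s+1}{2}$ collapses to a non-negative power of $\log t$ and the sharp constant is known, whereas for $s\ge 4$ the cited bound is the current state of the art but not known to be tight; (ii) the self-correction you describe is the right intuition, but in~\cite{bohman2010early} the actual bookkeeping is done by controlling an infinite ladder of extension variables simultaneously, which is somewhat more involved than a single stopping-time argument. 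Neither point affects the validity of your outline for the purposes of this paper.
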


Thus, the above theorem implies that for $s \ge 3$ and sufficiently large $t$, there is a graph on $\lceil R(s,t)-1 \rceil$ vertices that has neither an independent set of size~$s$ nor a clique of size~$t$. We call such a graph an \emph{$(s,t)$-Ramsey graph}.

\begin{corollary} \label{cor:ramsey graphs}
    For $s \ge 3$ and sufficiently large $t$, there is an $(s,t)$-Ramsey graph    on $ c_s \cdot t^{\frac{s+1}{2}} (\log t)^{\frac{1}{s-2} - \frac{s+1}{2}}$ vertices, for some positive constant $c_s$ depending only on $s$.
\end{corollary}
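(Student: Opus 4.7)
The plan is to derive the corollary directly from Theorem~\ref{thm:ramsery lower bound} together with the definition of the Ramsey number $R(s,t)$. Recall that $R(s,t)$ is defined as the \emph{minimum} integer $N$ such that every graph on $N$ vertices contains either an independent set of size $s$ or a clique of size $t$. By the minimality in this definition, there must exist a graph $H$ on exactly $R(s,t)-1$ vertices that contains neither an independent set of size $s$ nor a clique of size $t$; that is, $H$ is an $(s,t)$-Ramsey graph (in the terminology introduced just before the corollary).

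First, I would instantiate Theorem~\ref{thm:ramsery lower bound} to obtain, for every $s \ge 3$ and every sufficiently large $t$, the inequality
\[
R(s,t) - 1 \;\ge\; c'_s \cdot t^{\frac{s+1}{2}} (\log t)^{\frac{1}{s-2} - \frac{s+1}{2}} \;-\; 1.
\]
I would then choose $c_s := c'_s / 2$ (any constant strictly smaller than $c'_s$ works). Since the dominant term $t^{(s+1)/2} (\log t)^{\frac{1}{s-2} - \frac{s+1}{2}}$ tends to infinity with $t$ whenever $s \ge 3$ (observe that $(s+1)/2 \ge 2$ dominates any polylogarithmic factor), for all sufficiently large $t$ we have
\[
c'_s \cdot t^{\frac{s+1}{2}} (\log t)^{\frac{1}{s-2} - \frac{s+1}{2}} - 1 \;\ge\; c_s \cdot t^{\frac{s+1}{2}} (\log t)^{\frac{1}{s-2} - \frac{s+1}{2}}.
\]
Combining this with the previous inequality gives $R(s,t) - 1 \ge c_s \cdot t^{\frac{s+1}{2}} (\log t)^{\frac{1}{s-2} - \frac{s+1}{2}}$.

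To conclude, I would take the $(s,t)$-Ramsey graph $H$ on $R(s,t)-1$ vertices guaranteed above and pass to an arbitrary induced subgraph $H'$ on exactly $\bigl\lceil c_s \cdot t^{\frac{s+1}{2}} (\log t)^{\frac{1}{s-2} - \frac{s+1}{2}} \bigr\rceil$ vertices; such a subgraph exists since $H$ has at least this many vertices. The induced subgraph $H'$ inherits both properties: it contains no independent set of size $s$ and no clique of size $t$, because any such configuration in $H'$ would also appear in $H$. Hence $H'$ is an $(s,t)$-Ramsey graph of the required size, and the corollary follows.

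There is essentially no technical obstacle here; the only minor subtlety is ensuring that the additive $-1$ loss from the definition of $R(s,t)$ and the rounding up to an integer size are absorbed by relaxing the constant from $c'_s$ to $c_s$, which is why the statement requires $t$ to be sufficiently large (in a way that may depend on $s$).
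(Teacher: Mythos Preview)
Your proposal is correct and follows essentially the same approach as the paper: the corollary is stated immediately after the observation that Theorem~\ref{thm:ramsery lower bound} guarantees a graph on $\lceil R(s,t)-1\rceil$ vertices with neither an independent set of size $s$ nor a clique of size $t$, and your argument (absorb the $-1$ into the constant and pass to an induced subgraph of the exact target size) is precisely the routine justification the paper leaves implicit.
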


\subsection{Graph construction}
\label{sec: construction} 

\begin{lemma}\label{lm:prop G}
Let $k \ge 4, p \ge 1, \tau \ge 3$, and let $H=(V(H), E(H))$ be a $(k-1,\tau)$-Ramsey graph.
Then, there exists a connected $k$-claw-free graph $G$ on $\Theta(p|V(H)|)$ vertices such that $p \le \alpha(G) \le 3 pk$ and $\omega(G) \le 3\tau$.
\end{lemma}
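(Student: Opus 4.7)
The plan is to construct $G$ by taking $p$ disjoint copies $H_1,\ldots,H_p$ of $H$ (after replacing $H$ with its largest connected component, which still has $\Theta(|V(H)|)$ vertices since $\alpha(H) \le k-2$ bounds the number of components) and gluing them into a connected graph by a sparse chain of bridges. Concretely, fix any two distinct vertices $u,w \in V(H)$ and for each $1 \le i \le p-1$ introduce a new bridge vertex $b_i$ whose only edges are $u^{(i)}b_i$ and $b_iw^{(i+1)}$, where $v^{(i)}$ denotes the copy of $v \in V(H)$ inside $H_i$. The total vertex count is $p|V(H)|+(p-1) = \Theta(p|V(H)|)$, and connectivity follows from each $H_i$'s connectivity together with the bridge chain.

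The clique and independence bounds are nearly immediate. Each bridge $b_i$ has degree $2$ and its two neighbors lie in different copies, hence are non-adjacent in $G$, so $b_i$ lies in no triangle; thus every clique of $G$ sits inside a single $H_i$, giving $\omega(G) = \omega(H) \le \tau-1 \le 3\tau$. For independence, $\alpha(G)\ge p$ is witnessed by picking one vertex from each copy (pairwise non-adjacent since there are no direct cross-copy edges), and $\alpha(G) \le 3pk$ follows from $\alpha(G) \le \sum_{i=1}^p \alpha(H_i) + (p-1) \le p(k-2)+(p-1) \le pk$.

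The heart of the proof is $k$-claw-freeness, shown by case-analyzing the center of a hypothetical $k$-claw. A claw centered at a bridge has only two leaves, harmlessly below $k-1$. A claw centered at $v^{(i)}$ with $v \notin \{u,w\}$ has its entire neighborhood inside $H_i$, so the leaves form an independent set of size at most $\alpha(H[N_H(v)]) \le \alpha(H) \le k-2$. The delicate case is the center being $u^{(i)}$ (symmetrically $w^{(i)}$): its neighborhood is $N_H(u)^{(i)}$ together with the single bridge $b_i$, and the crucial observation is that $b_i$ is non-adjacent to every $v'^{(i)}$ with $v' \ne u$, since only $u$ in layer $i$ is attached to $b_i$; hence the neighborhood-independent set grows by at most one over the Ramsey bound, giving size at most $(k-2)+1 = k-1$.

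This last case is the main obstacle. The essential design choice is insisting $u \ne w$: had $u=w$, the single vertex $u^{(i)}$ would attach to both $b_{i-1}$ and $b_i$, and these two bridges, being non-adjacent to each other and to the in-layer neighbors, would contribute two non-adjacent leaves and push the neighborhood-independent set to $\alpha(H[N_H(u)])+2 \le k$, creating a $k$-claw. With $u \ne w$ enforced, every step of the case analysis reduces to elementary accounting against the $\alpha(H) \le k-2$ and $\omega(H) \le \tau-1$ bounds from the $(k-1,\tau)$-Ramsey assumption on $H$.
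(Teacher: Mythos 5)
Your proof is correct, and the construction is genuinely different from---and considerably simpler than---the one in the paper. The paper builds an auxiliary graph $G'$ from a \emph{ring} of blocks, each block consisting of two copies of $H$ joined by a perfect matching together with a clique $K_{\tau-1}$ and matchings into it; the final graph $G$ is then a line-graph-like ``conflict graph'' on the matching edges of $G'$ (an instance of the bi-conflict composition sketched in the overview). You instead take $p$ disjoint copies of $H$ and chain them by $p-1$ new degree-$2$ bridge vertices, with the crucial design choice $u \ne w$ so that each copy-vertex gains at most one bridge neighbor; this keeps the independence number of every vertex neighborhood at most $(k-2)+1 = k-1$, preserving $k$-claw-freeness by an elementary count, and your bounds are in fact tighter ($\omega(G) \le \tau-1$ and $\alpha(G) \le pk$ versus $3\tau$ and $3pk$). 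You are also more careful about connectivity: the paper tacitly assumes the Ramsey graph $H$ is connected, whereas you explicitly pass to its largest component and observe that $\alpha(H) \le k-2$ caps the number of components, so the component retains $\Omega(|V(H)|/k)$ vertices---enough to keep $|V(G)| = \Theta(p|V(H)|)$ for fixed $k$. In short, your bridge-chain construction buys simplicity and slightly sharper constants; the paper's bi-conflict composition is plausibly chosen to align with the conceptual framework developed in the introduction, but for this lemma it is not needed.
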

\begin{proof}
 We construct the graph $G$ in two steps. In the first step, we construct an auxiliary graph $G'=(V(G'), E(G'))$ using  $(k-1,\tau)$-Ramsey graph $H$. Finally, in the second step, using this auxiliary graph $G'$, we construct our  graph $G$.   

\textit{Construction of auxiliary graph $G'$.}
Let $Q = K_{\tau-1}$ be the complete graph on $\tau-1$ vertices. 
We construct the graph $G'$ in two steps: In the first step, we use $H$ and $Q$ to create a  graph $B$, which we call a \emph{block}.
In the second step, we construct the graph $G'$ using $p$ copies of $B$. For $i \in [p]$, we construct block $B_i$ using $Q$ and two copies of $H$ as follows (see Figure~\ref{fig:block}). Let $H^1_i$ and $H^2_i$ be copies of $H$, and let $Q_i$ be a copy of $Q$. We connect each vertex in $H^1_i$ with its respective copy in $H^2_i$ by an edge. Then, we pick an arbitrary set of $\tau-1$ vertices from $H^2_i$ and add an (arbitrary) matching between this set and the vertices of $Q_i$. 
Finally, we connect each $B_i$ to $B_{(i+1) \mod p}$ as follows (see Figure~\ref{fig:twoblock}). We pick an arbitrary set of $\tau-1$ vertices from $H^1_{(i+1) \mod p}$ and add a matching between this set and the vertices of $Q_i$. 
This completes the construction of $G'$. Notice that $G'$ forms a ring structure consisting of  $B_i$s.
We call the new edges that we added in our construction \emph{matching edges}.

\begin{figure}
    \centering
    \includegraphics[scale=0.3]{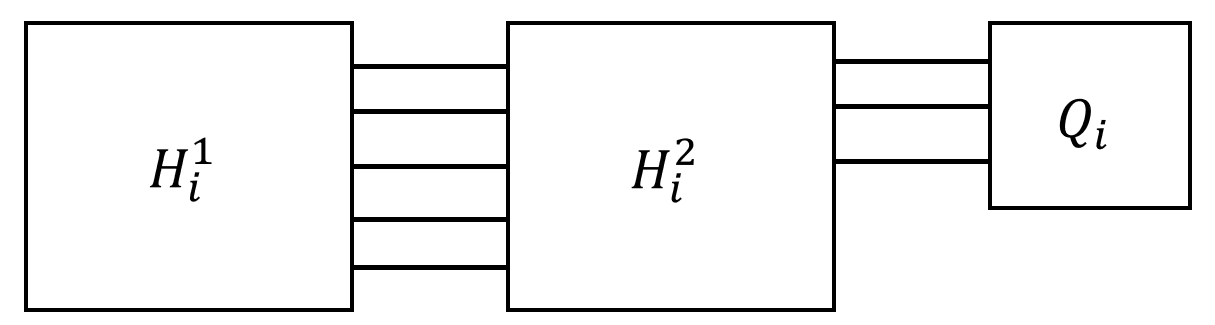}
    \caption{One block $B_i$ of the graph $G'$}
    \label{fig:block}
\end{figure}
 \vspace{-4em}
\begin{figure}
    \centering
    \includegraphics[scale=0.5]{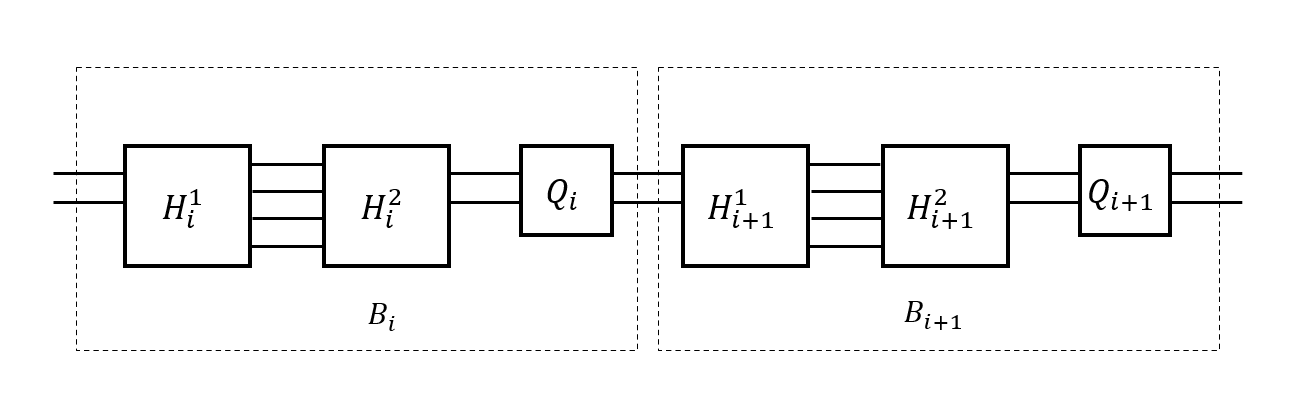}
    \caption{Connecting block $B_i$ and $B_{i+1}$ in $G'$}
    \label{fig:twoblock}
\end{figure}

\textit{Construction of $G$.}
For every matching-edge $e_i$ in $G'$, we create a vertex $v_i$ in $G$. Then, $(v_i,v_j) \in E(G)$ if $e_i$ and $e_j$ is an intersecting matching pair (i.e., they do not form an induced matching).


\textit{Analysis.} 
    Let $n= |V(G)|$, then note $n =  p(q_k(\tau)+2 (\tau-1))$. Hence, $3pq_k(\tau) > n \ge pq_k(\tau)$. Since the Ramsey graph $H$ and $Q=K_{\tau-1}$ used in construction of $B$ are connected, we have that $G'$ is connected. This implies that $G$ is connected. Next we bound $\alpha(G)$. First note that $\alpha(G) \ge p \alpha(H)$ since from each block $B_i$ we can pick matching edges between $H^1_i$ and $H^2_i$ that have endpoints on the vertices of $\alpha(H)$. The vertices in $G$ corresponding to these matching edges form an independent set. For the upper bound, we will show that $\alpha(G) < 3 p \alpha(H)$. Suppose for contradiction $\alpha(G) \ge 3 p\alpha(H)$. 
    Let $I$ be an independent set in $G$ with $|I| \ge 3p \alpha(H)$.
    Let $M$ be the matching-edges of $G'$ corresponding to the vertices in $I$. 
    Let $M_i \subseteq M$ be the edges of $M$ with both endpoints in $B_i$.
    Since $B_i$ and $B_{(i+1) \mod p}$ are connected by $Q_i$ (which is a complete graph), the number of edges of $M$ that are between $B_i$s is at most $p$. Hence, the number of edges of $M$ that lie completely within some $B_i$ is at least $M-p$.
    Since $|M| \ge 3 p \alpha(H)$, it must be that there is a block $B_i$ such that  $|M_i| \ge (3p \alpha(H)-p)/p \ge 2 \alpha(H)$. Since the edges of $M_i$ should have an endpoint in $H^2_i$, it implies that $\alpha(H^2_i) \ge |M_i| \ge 2 \alpha(H)$, which is a contradiction since $H^2_i$ is a copy of $H$.
    Thus, we have that $p\alpha(H) \le \alpha(G) < 3p \alpha(H)$.

    For bounding $\omega(G)$, let $\cC$ be a clique of maximum size in $G$. We claim that $|\cC| \le 3\tau$, which implies that $\omega(G) \le 3\tau$.
    Let $E_\cC$ be the matching-edges of $G'$ corresponding to $\cC$. 
    Let $E'_{\cC} \subseteq E_\cC$ be the edges which have one endpoint in some copy of $Q$, and let $E''_\cC = E_\cC \setminus E'_\cC$. Suppose $E'_\cC \neq \emptyset$, then consider an edge $e' \in E'_\cC$ and suppose $e'$ is incident on $Q_i$ some for $i \in [p]$. Then, observe that every edge in $E'_\cC$ must also be incident on $Q_i$ since  $E'_\cC$ is an intersecting matching and edges incident on $Q_i$ can not intersect with edges incident on $Q_j, j \neq i \in [p]$ by construction. Thus, $|E'_\cC| \le 2 |Q| < 2\tau$. Hence, when $E''_\cC = \emptyset$, we have that $|E_\cC| < 2\tau$ implying $|\cC| < 2\tau$, as desired. For the other case when $E''_\cC \neq \emptyset$  then consider $e'' \in E''_\cC$. We will bound the number of neighbors in $G$ of $e''$ corresponding to the edges of $E'_\cC$ and $E''_\cC$. First, note that $e''$ has at most $|Q| < \tau$ neighbors in $E'_\cC$ since $e''$ can intersect with at most $|Q|=\tau-1$ many edges whose one end point is in $Q_i$. Next, we will show that $e''$ has at most $2(\tau-1)$ neighbors in $E''_\cC$, which implies that $|E''_\cC| < 2\tau$, and hence $|\cC| = |E_\cC| = |E'_\cC| + |E''_\cC| < 3\tau $, as desired.
    To see this, note that $e''$ is incident on two adjacent copies of $H$, and hence the number of (matching) edges between these copies of $H$ that form an intersecting matching together with $e''$ is at $2 \omega(H) = 2(\tau-1)$.

    Finally, we show that $G$ is $k$-claw-free. Suppose for contradiction, there is a $k$-claw $T$ in $G$ with central vertex $v \in T$ and leaves $v_1,\cdots, v_k \in T$. Let $M_T=\{e,e_1,\cdots, e_k\}$ be the matching-edges in $G'$ corresponding to $T$ with $e$ corresponding to $v$ and $e_i$ corresponding to $v_i$, for $i \in [k]$.  Also, let $L_T = \{e_1,\cdots,e_k\}$.
    First consider the case when $e$ is incident between a copy $H'$ of $H$ and a copy $Q'$ of $Q$.
    Let $I \subseteq V(H')$ be the endpoints of edges of $L_T$ in $H'$. We claim that $|I| \ge k-1$. To see this, note that for every edge $e_j \in L_T$ that is not incident on $Q'$ must have one endpoint in $H'$ since $e_j$ and $e$ form an intersecting matching pair. On the other hand, since  $Q'$ is a clique, there can be at most one edge in $L_T$ with endpoint in $Q'$ implying $|I| \ge k-1$.
    As edges in $L_T$ form an induced matching in $G'$, we have that $I$ is an independent set in $H'$. But then $|I| \ge k-1$ which is a contradiction since $\alpha(H') < k-1$. 
    Now, consider the other case when $e$ is between two copies $H'$ and $H''$ of $H$. 
    Let $L'_T \subseteq L_T$ be the edges that are between $H'$ and $H''$, and let $L''_T = L_T \setminus L'_T$. Then, note that $|L'_T| < |L_T|=k$, since otherwise the endpoints of $L'_T$ in $H'$ (or $H''$)  form an independent set of size $k$ in $H'$ (or $H''$ resp.), leading to the contradiction to the fact that $\alpha(H) < k-1$.
    Suppose $(k-2) \le |L'_T| \le (k-1)$, and let $e_j \in L''_T$. Then, since $e_j$ and $e$ form an intersecting matching pair, assume, without loss of generality, $e_j$ has an endpoint in $H'$. But then, since every edge of $L'_T \cup \{e_j\}$ has one endpoint in $H'$, the endpoints of $L'_T \cup \{e_j\}$ in $H'$ form an independent set of size $k-1$ in $H'$ , leading to a contradiction. 
    Finally, if $|L'_T| < k-2$, then there must be at least two edges $e_i,e_j \in L''_T$ incident on one of the two copies of $Q$ adjacent to $H'$ and $H''$. But since $Q$ is a clique, this means $(v_i,v_j) \in E(G)$ contradicting the fact that $T$ is a $k$-claw. \qed
\end{proof}

\subsection{Proof of Theorem~\ref{thm:gen cntegcsgen}}
For $k\ge 4$, let $H$ be a $(k-1,t)$-Ramsey graph obtained from Corollary~\ref{cor:ramsey graphs},  for every sufficiently large $t$. Let $q_k(t) = |V(H)| = c_k t^{\frac{k}{2}} (\log t)^{\frac{1}{k-3} - \frac{k}{2}}$. Then, using $p=2^{q_k(t)}$ and $\tau = t$ along with graph $H$, Lemma~\ref{lm:prop G} produces a graph $G$ on $n := \Theta(2^{q_k(t)}q_k(t))$ vertices such that $\alpha \ge 2^{q_k(t)} = \Omega(n /\log n)$, and $\alpha(G) \le  3k 2^{q_k(t)} = O(nk/\log n)$, and $\omega(G) \le 3t$. Hence, we have
\begin{align*}
    \chi(G) &\ge \frac{n}{\alpha(G)} =\Omega\left(\frac{q_k(t)}{k} \right) =\Omega\left(\frac{c_k}{3k}  \left(\frac{t}{\log t}\right)^{k/2} \right)
    = f(k)\left(\frac{\omega(G)}{\log \omega(G)}\right)^{k/2}, 
\end{align*}
\text{ for some $f(k)$.}

\section{Convex relaxation results}
\label{sec:convex}

\subsection{Convex relaxation prelims}
We explain only necessary terminologies to prove our results. For a complete exposition on sum-of-squares and related convex relaxations, we refer the readers to  excellent survey papers~\cite{laurent2003comparison,fleming2019semialgebraic,rothvoss2013lasserre}. 
Let $K_G$  be the polytope $\{x \in [0,1]^{V(G)}: x_i + x_j \leq 1, \forall (i,j) \in E(G)\}$. The standard LP relaxation for MWIS $\max \{\sum_v w_v x_v: x \in K_G\}$ is known to have integrality gap of at least $\Omega(n)$ on $n$-vertex graphs\footnote{Consider the clique $K_n$ on $n$ vertices and LP assignment $x_i=1/2$ for vertex $i \in K_n$.}. 

\paragraph{Sum-of-squares.} 
The sum-of-square hierarchies (or  Lasserre hierarchies)~\cite{lasserre2001explicit,lasserre2001global,parrilo2000structured,parrilo2003semidefinite} can be applied to (increasingly) tighten any linear program (captured by the level in the hierarchy). For any $t \geq 1$, the $t$-th level of SoS can be computed in time $n^{O(t)}$.

Now we formally define SoS, following the treatment of Rothvoss~\cite{rothvoss2013lasserre}.  

\begin{definition}
Define the $t$-th level of SoS hierarchy ${\sf SoS}_t(K_G)$ as the set of vectors $z \in {\mathbb R}^{2^{V(G)}}$ that satisfy: 
\begin{align*}
M_t(z) &:= (z_{I \cup J})_{|I|, |J| \leq t} \succeq 0 \qquad \text{and}   \\
M^{ij}_t(z) &:= (z_{I \cup J} - z_{I \cup J \cup \{i\}} - z_{I \cup J \cup \{j\}})_{|I|,|J| \leq t} \succeq 0 \quad (\forall (i,j) \in E(G))
\end{align*}
Let ${\sf SoS}^{proj}_t(K_G) = \{(z_i)_{i \in V(G)}: z \in {\sf SoS}_t(K_G)\}$ be the projection on the original variables. 
\end{definition} 
It is standard to view $1$ as $z_{\emptyset}=1$ (so that we have variables $z_I$ for all subsets).

\begin{proposition}[Lemma 8 in~\cite{rothvoss2013lasserre}]
If any solution $x \in K_G$ contains at most $t$ ones, then any $z \in {\sf SoS}^{proj}_t(K_G)$  is a convex combination of integer solutions in $K_G \cap \{0,1\}^{V(G)}$. 
\end{proposition}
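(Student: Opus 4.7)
The plan is to realize $z$ as the moment sequence of a genuine probability distribution on the integer points of $K_G$, using the hypothesis $\alpha(G) \le t$ to force enough entries of $z$ to vanish. I would proceed in three stages: first establish a vanishing lemma restricting the effective support of $z$; then write down candidate weights on integer points via Möbius inversion; and finally verify nonnegativity of these weights from the PSD structure.

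For the first stage, the diagonal of $M^{ij}_t(z) \succeq 0$ reads $z_I \ge z_{I \cup \{i\}} + z_{I \cup \{j\}}$ for every $|I| \le t$. Specializing to $I = \{j\}$ gives $z_{\{i,j\}} \le 0$, and combined with $z_{\{i,j\}} \ge 0$ from the diagonal of $M_t(z)$ this forces $z_{\{i,j\}} = 0$ for every edge $(i,j)$. The $2 \times 2$ principal minors of $M_t(z)$ (giving $z_{I \cup J}^2 \le z_I z_J$) then propagate this zero and yield $z_S = 0$ whenever $S$ contains an edge. Since the hypothesis forbids independent sets of size exceeding $t$, every $S$ with $|S| > t$ must contain an edge, so $z_S = 0$ for all such $S$ as well.

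For the second stage, define for each independent set $S$ with $|S| \le t$ the quantity
$$\lambda_S = \sum_{T \supseteq S,\, |T| \le t} (-1)^{|T \setminus S|} z_T,$$
where by the vanishing only independent $T$ contribute nonzero terms. Möbius inversion on the subset lattice then gives $z_I = \sum_{S \supseteq I} \lambda_S$ for every $|I| \le t$, and in particular $\sum_S \lambda_S = z_\emptyset = 1$. Projecting to singletons yields $z_i = \sum_{S \ni i} \lambda_S$, which is the required convex combination provided the $\lambda_S$ are nonnegative.

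The main obstacle is precisely this nonnegativity. Unlike standard Lasserre positivity arguments, $\lambda_S$ is not manifestly a squared norm because the signs $(-1)^{|T \setminus S|}$ appear linearly rather than inside a product. I would attack this via a rank/flat-extension argument in the style of Curto--Fialkow: the hypothesis together with the stage-one vanishing forces the columns of $M_t(z)$ indexed by independent sets of size exactly $t$ to already lie in the span of columns indexed by smaller independent sets (a ``flat'' condition), which by the flat-extension theorem implies that $z$ admits an atomic representation supported on $\{0,1\}$-points that are themselves feasible for $K_G$. A uniqueness check then identifies each $\lambda_S$ with a genuine atomic mass, and hence nonnegative. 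Verifying the flat condition directly from the SoS constraints---and confirming that the atoms are exactly the indicator vectors of independent sets of size at most $t$---is where the bulk of the technical work of the proof sits.
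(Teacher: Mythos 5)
Your first two stages are sound: the vanishing argument ($z_S=0$ whenever $S$ contains an edge) via the diagonal of $M^{ij}_t$ and the $2\times 2$ minors of $M_t$ is correct, and the M\"obius inversion identity $z_I = \sum_{S\supseteq I,\,|S|\le t}\lambda_S$ with $\sum_S\lambda_S=1$ holds as you compute. But the third stage---nonnegativity of the $\lambda_S$---is a genuine gap, and the route you sketch does not close it. The flatness condition you invoke (columns of $M_t(z)$ indexed by independent sets of size exactly $t$ lying in the span of columns indexed by smaller independent sets) is simply false in general. Already for a single-vertex graph $G$ with $t=1$, $M_1(z)=\left(\begin{smallmatrix}1 & z_1\\ z_1 & z_1\end{smallmatrix}\right)$ has rank $2$ whenever $0<z_1<1$, while $M_0(z)$ has rank $1$, so the column indexed by $\{1\}$ is \emph{not} in the span of the column indexed by $\emptyset$; no Curto--Fialkow-style flat extension is available, even though the statement holds ($z_1\in[0,1]$). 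So you cannot obtain the atomic representation by this mechanism, and you yourself flag that verifying nonnegativity is where ``the bulk of the technical work'' sits---that work is the proof.

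The argument in the cited source (Lemma 8 of Rothvoss's survey) uses a different and more elementary mechanism: the \emph{decomposition (conditioning) theorem}. If $z\in{\sf SoS}_t(K_G)$ has a fractional coordinate $z_i\in(0,1)$, then $z$ is a convex combination $z = z_i\,z^{(1)} + (1-z_i)\,z^{(0)}$ of two solutions at level $t-1$ with $z^{(1)}_i=1$ and $z^{(0)}_i=0$; moreover, conditioning $z_i=1$ forces $z^{(1)}_j=0$ for all neighbors $j$ of $i$ (this uses the edge constraints $M^{ij}\succeq 0$). Iterating, each time a variable is set to $1$ the ``budget'' of possible ones drops by one, and the hypothesis that every $x\in K_G$ has at most $t$ ones guarantees that after at most $t$ rounds of conditioning the remaining solution is forced to be integral. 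This branching recursion exhibits $z$ directly as a convex combination of $0/1$ points of $K_G$, and implicitly shows that the coefficients $\lambda_S$ you define are nonnegative---but it does so via an inductive conditioning argument, not via flat extension. If you want to salvage your M\"obius-inversion framing, you should prove $\lambda_S\ge 0$ by induction on $t-|S|$, observing that $\lambda_S/z_S$ equals $\lambda^{(S)}_\emptyset$ for the conditioned solution $z^{(S)}_U := z_{S\cup U}/z_S$ on the subgraph induced by the non-neighbors of $S$, which is exactly the conditioning argument in disguise.
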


\begin{corollary}
\label{cor:exact SoS for small alpha} 
Let $z^* = \arg \max \{\sum_{i \in V(G)} w_i z_i : z \in {\sf SoS}_t(K_G)\}$. 
If $\alpha(G) \leq t$, then the objective value of $z^*$ is exactly the value of maximum weight independent set in $G$.     
\end{corollary}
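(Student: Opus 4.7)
The plan is to derive Corollary~\ref{cor:exact SoS for small alpha} as an essentially immediate consequence of the cited Rothvoss proposition. The objective $\sum_i w_i z_i$ depends only on the singleton coordinates of $z$, so it is really a linear objective over the projected polytope ${\sf SoS}^{proj}_t(K_G)$. Hence it suffices to understand how this projection sits relative to the integer hull of $K_G$ under the hypothesis $\alpha(G)\le t$.

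First I would observe that $K_G\cap\{0,1\}^{V(G)}$ is exactly the set of characteristic vectors of independent sets of $G$: the edge inequalities $x_i+x_j\le 1$ for $(i,j)\in E(G)$ force at most one endpoint of each edge to be chosen. Thus the maximum of $\sum_v w_v x_v$ over $K_G\cap\{0,1\}^{V(G)}$ is exactly the maximum weight of an independent set in $G$.

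Next I would verify the hypothesis of the preceding Proposition (Lemma~8 of \cite{rothvoss2013lasserre}). That proposition requires that every $0/1$ point of $K_G$ contain at most $t$ ones, and this is exactly what $\alpha(G)\le t$ gives us: any integer point of $K_G$ is the indicator of an independent set, which has at most $\alpha(G)\le t$ elements. Applying the proposition, every $z\in{\sf SoS}^{proj}_t(K_G)$ is a convex combination of characteristic vectors of independent sets.

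Finally, linearity of the objective finishes the argument: the value $\sum_{i}w_i z_i^*$ of the optimum SoS solution equals a convex combination of the objective values of independent sets, and is therefore at most $\alpha_w(G)$, the maximum weight independent set value. Conversely, the characteristic vector of any optimum independent set lifts to a feasible SoS solution (set $z_I=1$ iff $I$ is contained in that independent set), witnessing the matching lower bound. There is no real obstacle here; the only subtlety worth flagging is to make explicit that the objective in ${\sf SoS}_t(K_G)$ depends only on the projection, so that passing to ${\sf SoS}^{proj}_t(K_G)$ is legitimate before invoking the proposition.
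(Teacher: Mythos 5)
Your argument is correct and is exactly the (unwritten but clearly intended) derivation in the paper: the corollary is stated immediately after the Rothvoss proposition with no proof, precisely because it follows by the route you describe---verify that $\alpha(G)\le t$ meets the hypothesis, conclude that the projection is a convex combination of independent-set indicators, and use linearity for the upper bound together with the trivial lift of an optimal integral point for the lower bound. Nothing to add.
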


Our next proposition states that any feasible solution of SoS (at level at least two) can be projected into a feasible solution for QSTAB. 
The proof is somewhat of a folklore nature. Since it has never been written anywhere in the form we need, we provide a proof in Appendix~\ref{sec:prop2 proof} for completeness. 

\begin{proposition}
\label{prop: SoS and QSTAB} 
Let $z \in {\sf SoS}^{proj}_t(G)$ for $t \geq 2$. Then $z \in {\sf QSTAB}(G)$. 
\end{proposition}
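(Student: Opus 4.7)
The plan is to extract each clique inequality $\sum_{v\in Q} z_v \le 1$ from the positive semidefiniteness of a single $(k+1)\times(k+1)$ principal submatrix of the moment matrix $M_t(z)$, supplemented by one preparatory identity obtained from the ``edge'' PSD constraints $M^{ij}_t(z)\succeq 0$. Throughout I use the standing convention $z_\emptyset = 1$ and the fact that every principal submatrix of a PSD matrix is itself PSD.

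The preparatory step is: for every edge $(i,j)\in E(G)$, one has $z_{\{i,j\}} = 0$. On the one hand, $M_t(z)\succeq 0$ implies that its diagonal entry at the row/column indexed by $\{i,j\}$, which equals $z_{\{i,j\}\cup\{i,j\}} = z_{\{i,j\}}$, is nonnegative; this access is legal because $|\{i,j\}| = 2 \le t$. On the other hand, plugging $I = J = \{i\}$ into the definition of $M^{ij}_t(z)$ yields the diagonal entry $z_{\{i\}} - z_{\{i\}} - z_{\{i,j\}} = -z_{\{i,j\}}$, which must be nonnegative by $M^{ij}_t(z)\succeq 0$. Combining, $z_{\{i,j\}} = 0$; both directions rely on $t \ge 2$.

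For the main step, fix a clique $Q = \{v_1,\ldots,v_k\}$ and consider the principal submatrix $A$ of $M_t(z)$ whose row and column indices are $\emptyset,\{v_1\},\ldots,\{v_k\}$. Reading off entries from the rule $(I,J) \mapsto z_{I\cup J}$ and applying the preparatory step to annihilate the off-diagonal block entries $z_{\{v_i,v_j\}} = 0$ (for $i\ne j$), the matrix $A$ takes the ``arrow'' form with a $1$ at position $(\emptyset,\emptyset)$, first row/column equal to $(z_{v_1},\ldots,z_{v_k})$, and $\mathrm{diag}(z_{v_1},\ldots,z_{v_k})$ in the lower-right block. Evaluating $x^\top A x$ at the test vector $x = (-1,1,1,\ldots,1)^\top$ gives
\[
x^\top A x \;=\; 1 \;-\; 2\sum_{i=1}^{k} z_{v_i} \;+\; \sum_{i=1}^{k} z_{v_i} \;=\; 1 - \sum_{v\in Q} z_v,
\]
and PSDness of $A$ forces the right-hand side to be nonnegative.

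I do not foresee any serious obstacle. The only subtleties are to check that the invoked rows/columns are available at level $t \ge 2$ (they are, since the largest index involved has size $2$); to note that the clique size $k$ need \emph{not} be bounded by $t$, since $A$ uses only indices of size $\le 1$; and to ensure that $A$ is a genuine principal submatrix (rows and columns coincide) so that PSDness is inherited. With these points in place, the whole argument reduces to the single quadratic-form evaluation displayed above.
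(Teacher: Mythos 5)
Your proof is correct, and it takes a somewhat different route from the paper's. The paper passes to the vector (Cholesky) representation $\{v_I\}$ of the moment matrix, cites Rothvoss's Lemma~7 for $v_I \cdot v_J = z_{I\cup J}$ and Corollary~3 for $v_i \cdot v_j = 0$ on edges, and then applies a Cauchy--Schwarz argument: $\sum_{i\in Q} z_i = \|\sum_{i\in Q} v_i\|^2$ and $(\sum_{i\in Q} v_i)\cdot v_\emptyset \le \|\sum_{i\in Q} v_i\|$, forcing $\|\sum_{i\in Q} v_i\| \le 1$. You instead work directly with the matrices: you first derive $z_{\{i,j\}}=0$ for edges from first principles (nonnegativity of a diagonal entry of $M_t(z)$ together with nonnegativity of the corresponding diagonal entry of $M^{ij}_t(z)$, both requiring only $t\ge 2$), then extract the clique inequality from the $(|Q|+1)\times(|Q|+1)$ arrow-shaped principal submatrix via the single test vector $(-1,1,\ldots,1)$. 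In vector language your quadratic form is exactly $\|-v_\emptyset + \sum_{i\in Q} v_i\|^2 \ge 0$, so the two proofs rest on the same structural facts; the difference is that you avoid the black-box appeal to Rothvoss and package the final step as one explicit quadratic-form evaluation rather than a chain of Cauchy--Schwarz inequalities. Your version is a bit more self-contained and makes it transparent why the clique size need not be bounded by $t$ (only index sets of size $\le 2$ are ever touched), a point the paper leaves implicit.
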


\paragraph{Sherali--Adams.} 
Another standard way to increasingly tighten a convex relaxation (such as $K_G$) is via Sherali--Adams hierarchies~\cite{sherali1990hierarchy}. 
Let $G=(V,E)$ be a graph with $V=[n]$ of (unweighted) maximum independent set problem. The \qstab\ LP for $G$ is as follows.
	\begin{eqnarray*}
		\mbox{(\qstab($G$))} & \max & \sum_{i \in [n]} x_i \\ 
		& \mbox{s.t.} & 1- \sum_{ i \in Q} x_i \geq 0\quad \forall \mbox{ clique $Q$}  \\
		& & x_i \geq 0 \quad  \forall i \in [n]
	\end{eqnarray*}

For $\ell \ge 1$, the Sherali Adams hierarchy applied on $\qstab(G)$ is as follows.
	\begin{eqnarray}
		\mbox{(\saplg)} & \max & \sum_{i \in [n]} y_{\{i\}} \nonumber \\ 
		& \mbox{s.t.}   &\forall  S,T \subseteq[n], S \cap T = \emptyset, |S \cup T| \le \ell \text{ following holds.}\nonumber \\
  & \sum_{T' \subseteq T} (-1)^{|T'|} y_{S \cup T'} &-  \sum_{ i \in Q} \sum_{T' \subseteq T} (-1)^{|T'|}  y_{S \cup T' \cup \{i\}}  \geq 0\  \forall \mbox{ clique $Q$} \label{eqn:sa:cliquejunta} \\
  &  \sum_{T' \subseteq T} (-1)^{|T'|} y_{S \cup T'} &\ge 0 \label{eqn:sajunta} \\
    &  \sum_{T' \subseteq T} (-1)^{|T'|} y_{S \cup T' \cup \{i\}} &\ge 0 \quad  \forall i \in [n] \label{eqn:sa:juntavar}\\
    &  y_{\emptyset} &= 1
	\end{eqnarray}
It can be shown that ${\sf SoS}_{\ell+2}$ is at least as strong as \saplg. The formal statement is encapsulated in the following proposition. 

\begin{proposition}
Let $z \in {\sf SoS}_{\ell+2}(K_G)$. Then the solution $\{z_{I}\}_{|I| \leq \ell}$ is feasible for \saplg.     
\end{proposition}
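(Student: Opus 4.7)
The plan is to work with the standard pseudo-expectation interpretation of the SoS hierarchy. A vector $z \in {\sf SoS}_{\ell+2}(K_G)$ corresponds to a linear functional $\tilde{\mathbb{E}}$ on multilinear polynomials in variables $x_1,\dots,x_{|V(G)|}$ of degree at most $2(\ell+2)$, defined by $\tilde{\mathbb{E}}[x^I] := z_I$ where $x^I := \prod_{i \in I} x_i$. The conditions $M_{\ell+2}(z) \succeq 0$ and $M^{ij}_{\ell+2}(z) \succeq 0$ translate (by a standard rewriting of the quadratic form $\alpha^\top M \alpha$) to $\tilde{\mathbb{E}}[q^2] \geq 0$ and $\tilde{\mathbb{E}}[q^2(1-x_i-x_j)] \geq 0$ for every multilinear $q$ of degree at most $\ell+2$ and each edge $(i,j) \in E(G)$ in the latter case. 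Because $z$ is indexed by sets, the Boolean identity $x_i^2 \equiv x_i$ is automatically respected by $\tilde{\mathbb{E}}$. Setting $y_I := z_I$, I verify each family of SA constraints in turn.

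For the junta constraints~\eqref{eqn:sajunta} and~\eqref{eqn:sa:juntavar}, let $p := x^S \prod_{j \in T}(1-x_j)$, so the LHS of~\eqref{eqn:sajunta} is exactly $\tilde{\mathbb{E}}[p]$. Using $x_i^2 \equiv x_i$ and $(1-x_j)^2 \equiv 1-x_j$, the polynomial $p$ is idempotent modulo Boolean identities, i.e., $p^2 \equiv p$. Hence $\tilde{\mathbb{E}}[p] = \tilde{\mathbb{E}}[p^2] \geq 0$ since $\deg(p) = |S|+|T| \leq \ell \leq \ell+2$. Constraint~\eqref{eqn:sa:juntavar} follows by the same argument applied to $x^{S \cup \{i\}} \prod_{j \in T}(1-x_j)$, whose degree is at most $\ell+1$.

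The main step is the clique constraint~\eqref{eqn:sa:cliquejunta}, which in pseudo-expectation form reads $\tilde{\mathbb{E}}[p\,(1-\sum_{i\in Q}x_i)] \geq 0$. I would pass to the conditioned functional $\tilde{\mathbb{E}}'[\,\cdot\,] := \tilde{\mathbb{E}}[p \cdot\,]$ and show that, after normalization, it yields an element of ${\sf SoS}_2(K_G)$. Using $p^2 \equiv p$:
\[
\tilde{\mathbb{E}}'[q^2] \;=\; \tilde{\mathbb{E}}[p q^2] \;=\; \tilde{\mathbb{E}}[p^2 q^2] \;=\; \tilde{\mathbb{E}}[(pq)^2] \;\geq\; 0,
\]
whenever $\deg(pq) \leq \ell+2$, i.e., $\deg(q) \leq 2$; the analogous identity handles $\tilde{\mathbb{E}}'[q^2(1-x_i-x_j)] \geq 0$ for edges. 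If $\tilde{\mathbb{E}}[p] > 0$, dividing through by $\tilde{\mathbb{E}}[p]$ produces a bona fide ${\sf SoS}_2$ vector, whence Proposition~\ref{prop: SoS and QSTAB} gives $\sum_{i \in Q} \tilde{\mathbb{E}}'[x_i] \leq \tilde{\mathbb{E}}'[1]$, which is exactly~\eqref{eqn:sa:cliquejunta}. In the degenerate case $\tilde{\mathbb{E}}[p] = 0$, the pseudo-Cauchy--Schwarz inequality $\tilde{\mathbb{E}}'[x_i]^2 \leq \tilde{\mathbb{E}}'[1]\cdot\tilde{\mathbb{E}}'[x_i^2] = 0$ (which follows from $\tilde{\mathbb{E}}'[(a+bx_i)^2] \geq 0$ for all $a,b$) forces $\tilde{\mathbb{E}}'[x_i]=0$ for every $i$, and the constraint holds trivially.

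The main obstacle is tight degree accounting: the ``$+2$'' in $\ell+2$ is sharp because after absorbing the factor $p$ (of degree up to $\ell$) we still need two extra degrees of freedom so that $pq$ stays within the SoS guarantee for $\deg(q) \leq 2$, which is precisely the minimum level at which Proposition~\ref{prop: SoS and QSTAB} delivers a QSTAB-feasible projection.
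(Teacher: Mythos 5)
The paper itself supplies no proof of this proposition --- it is stated as folklore with the remark ``It can be shown that ${\sf SoS}_{\ell+2}$ is at least as strong as \saplg'' --- so there is no in-paper argument to compare against. Your argument is correct and self-contained, and it also explains why the level offset is $+2$ rather than something smaller.

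To summarize why it goes through: interpreting $z$ as a pseudo-expectation $\tilde{\mathbb{E}}$ on multilinear polynomials, you identify the junta $p=x^S\prod_{j\in T}(1-x_j)$ as an idempotent modulo the Boolean ideal, so $\tilde{\mathbb{E}}[p]=\tilde{\mathbb{E}}[p^2]\ge 0$ follows directly from $M_{\ell+2}(z)\succeq 0$ and the degree bound $\deg(p)\le\ell$, and the same for the variant with $x_i$ appended (degree $\le \ell+1$). For the clique constraint you condition on $p$, and the identity $\tilde{\mathbb{E}}[pq^2]=\tilde{\mathbb{E}}[(pq)^2]$ (again using $p^2\equiv p$) reduces positivity of the conditioned functional on squares of degree-$2$ polynomials --- and likewise for $q^2(1-x_i-x_j)$ --- to the PSD conditions $M_{\ell+2}(z)\succeq 0$, $M^{ij}_{\ell+2}(z)\succeq 0$ with $\deg(pq)\le\ell+2$. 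After normalization this hands you a point in ${\sf SoS}^{proj}_2(K_G)$, and Proposition~\ref{prop: SoS and QSTAB} supplies precisely the clique inequality $\sum_{i\in Q}\tilde{\mathbb{E}}'[x_i]\le\tilde{\mathbb{E}}'[1]$. Your handling of the degenerate case $\tilde{\mathbb{E}}[p]=0$ via pseudo-Cauchy--Schwarz ($\tilde{\mathbb{E}}'[x_i]^2\le\tilde{\mathbb{E}}'[1]\tilde{\mathbb{E}}'[x_i^2]$, obtained from $\tilde{\mathbb{E}}'[(a+bx_i)^2]\ge0$) is the right way to close that loophole. One small remark: the clean identifications $\tilde{\mathbb{E}}[p]=\sum_{T'\subseteq T}(-1)^{|T'|}z_{S\cup T'}$ and $\tilde{\mathbb{E}}[px_i]=\sum_{T'\subseteq T}(-1)^{|T'|}z_{S\cup T'\cup\{i\}}$ continue to hold even when $i\in S$ or $i\in T$, because both the SA sum and the multilinear reduction of $px_i$ collapse consistently (to $\tilde{\mathbb{E}}[p]$ or to $0$, respectively); your appeal to ``Boolean identities automatically respected'' covers this, but it is worth spelling out if you intend this proof for publication.
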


\subsection{Conditional $\chi$-boundedness and SoS}

\newcommand{\gset}{{\mathcal G}}

In this section, we prove Theorem~\ref{thm: connection}. 
Let $H$  be a graph. A clique replacement on graph $H$ replaces a vertex $v \in V(H)$ by a clique $K_v$ of arbitrary size and connects any vertex $u \in K_v$ to all neighbors of $v$. It is an easy exercise to check that $k$-claw-free graphs are closed under clique replacements.

Now we proceed to prove Theorem~\ref{thm: connection}. 
Let $\gset$ be a graph class that is closed under {clique replacement}. Consider an instance $G \in \gset$ and an optimal solution $z^* = \arg \max \{\sum_{i \in V(G)} w_i z_i: z \in {\sf SoS}_t(K_G)\}$. 
If $\alpha(G) \leq t$, we would be done, due to Corollary~\ref{cor:exact SoS for small alpha}.
Otherwise, we consider the projection $z$ of $z^*$ on $V(G)$, so $z \in {\sf QSTAB}(G)$ (due to Proposition~\ref{prop: SoS and QSTAB}). 
By Theorem 1 in~\cite{chalermsook2016note}, there is an independent set in $G$ whose weight is at least $\frac{1}{\gamma} \cdot (\sum_{i \in V(G)} w_i z_i)$, which implies that the integrality gap of this convex relaxation is at most $\gamma$.

\subsection{Integrality gap of Sherali--Adams on \qstab}

In this section, we will show large integrality gap even for the unweighted version of the problem. 
We first show the following theorem.
\begin{theorem} \label{thm:intgapsapl}
    For any graph $G$ on $n$ vertices and $\ell \ge 1$, the integrality gap of \saplg \ is at least $\frac{n}{\alpha(G)(\omega(G) + \ell)}$.
\end{theorem}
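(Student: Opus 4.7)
The plan is to exhibit a simple feasible solution for \saplg\ whose objective is large. Consider the ``trivial'' fractional assignment
\[
\hat y_{\emptyset}=1,\qquad \hat y_{\{i\}}=\tfrac{1}{\omega(G)+\ell}\text{ for every } i\in V(G),\qquad \hat y_S=0\text{ whenever }|S|\ge 2.
\]
The LP objective of $\hat y$ equals $\sum_{i\in V(G)}\hat y_{\{i\}}=\tfrac{n}{\omega(G)+\ell}$, and since the integer optimum of $\qstab(G)$ is exactly $\alpha(G)$, establishing feasibility of $\hat y$ will immediately yield an integrality gap of at least $\tfrac{n}{\alpha(G)(\omega(G)+\ell)}$.

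The heart of the proof is verifying that $\hat y$ satisfies the three families of constraints (\ref{eqn:sa:cliquejunta}), (\ref{eqn:sajunta}), and (\ref{eqn:sa:juntavar}) for every disjoint $S,T\subseteq V(G)$ with $|S\cup T|\le\ell$. I will do this by a short case analysis on $|S|$, relying on two elementary observations. First, whenever $|A|\ge 2$ every term of the alternating sum $\sum_{T'\subseteq T}(-1)^{|T'|}\hat y_{A\cup T'}$ vanishes; this immediately disposes of (\ref{eqn:sajunta}), (\ref{eqn:sa:juntavar}), and the ``product'' part of (\ref{eqn:sa:cliquejunta}) whenever the ``base'' set has size at least two. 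Second, for any $i\in T$ the sum $\sum_{T'\subseteq T}(-1)^{|T'|}\hat y_{S\cup T'\cup\{i\}}$ telescopes to zero by pairing $T'$ with $T'\triangle\{i\}$ (the two partners contribute equal values with opposite signs).

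In the remaining low-$|S|$ cases the computations are direct. For $|S|=0$ one obtains $\sum_{T'\subseteq T}(-1)^{|T'|}\hat y_{T'}=1-\tfrac{|T|}{\omega(G)+\ell}$, and for $|S|=1$ the corresponding sum equals $\tfrac{1}{\omega(G)+\ell}$; both are nonnegative since $|T|\le\ell$, which handles (\ref{eqn:sajunta}) and (\ref{eqn:sa:juntavar}). For the clique junta constraint (\ref{eqn:sa:cliquejunta}) with $S=\emptyset$, splitting the sum over $i\in Q$ by whether $i\in T$ reduces the inequality to
\[
1-\tfrac{|T\cup Q|}{\omega(G)+\ell}\;\ge\;0,
\]
which follows from $|T\cup Q|\le|T|+|Q|\le\ell+\omega(G)$. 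The $|S|=1$ version of (\ref{eqn:sa:cliquejunta}) becomes $\tfrac{1-\mathbf{1}[s\in Q]}{\omega(G)+\ell}\ge 0$, and the $|S|\ge 2$ version is $0\ge 0$.

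There is no real conceptual obstacle: the argument is essentially inclusion--exclusion bookkeeping in two non-vanishing cases, and the decisive inequality is precisely $|T|+|Q|\le\ell+\omega(G)$, which is what forces the additive $\ell$ in the denominator of the claimed integrality gap. In particular, the slack parameter $\ell$ is exactly what allows the clique junta constraint to absorb a ``worst case'' $T$ disjoint from a maximum clique $Q$.
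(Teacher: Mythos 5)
Your proposal is correct and follows essentially the same route as the paper: you exhibit the identical feasible solution $\hat y$ (value $1$ on the empty set, $\tfrac{1}{\omega(G)+\ell}$ on singletons, $0$ otherwise) and verify the three Sherali--Adams constraint families by the same case analysis on $|S|$, with the decisive inequality $|T\cup Q|\le\ell+\omega(G)$. The telescoping observation you make explicit is a clean way to phrase the bookkeeping the paper carries out by splitting the sum over $Q\cap T$ and $Q\setminus T$.
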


\begin{proof}
       To this end, we show the following lemma. 
\begin{lemma}  \label{lm:feasiblesa solution}
    Consider a solution $\hat{y}$ defined as follows. For $A \subseteq [n]$, define
    \[
    \yh_A = 
    \begin{cases}
        1 & \text{ if } A = \emptyset\\
        \frac{1}{\omega(G)+ \ell} & \text{ if } |A| = 1\\
        0 & \text{ otherwise } 
    \end{cases}
    \]
    Then, $\hat{y}$ is a feasible solution to \saplg.
\end{lemma}
\begin{proof}
For $S, T \subseteq [n], |S \cup T| \le \ell, S \cap T = \emptyset$, let $J_{S,T}(y) = \sum_{T' \subseteq T} (-1)^{|T'|} y_{S \cup T'}$.
We will first show that $\hat{y}$ satisfies constraint~(\ref{eqn:sa:cliquejunta} )of \saplg.
Fix some clique $Q$, then we will show that the left hand side of constraint~(\ref{eqn:sa:cliquejunta}): $\sum_{T' \subseteq T} (-1)^{|T'|} y_{S \cup T'} -  \sum_{ i \in Q} \sum_{T' \subseteq T} (-1)^{|T'|}  y_{S \cup T' \cup \{i\}}$ is at least $0$.
For $S$, consider the two cases:  when $S \ne \emptyset$ and  when $S =\emptyset$. For the first case, we have $|S| \ge 1$ and hence $J_{S,T}(\yh) = \yh_S$. Since $\yh_S = 0$ for $|S| \ge 2$, this means $J_{S,T}(\yh) = 0$ for $|S| \ge 2$, as required. For $S= \{a\}, a \in [n]$, this means $J_{S,T}(\yh) = \yh_a$.
Hence, we have
\begin{align*}
 \sum_{T' \subseteq T} (-1)^{|T'|} y_{S \cup T'} -  \sum_{ i \in Q} \sum_{T' \subseteq T} (-1)^{|T'|}  y_{S \cup T' \cup \{i\}}    &= \yh_{\{a\}} - \sum_{i \in Q} \yh_{\{a\} \cup \{i\}}    
\end{align*}
Now if $a \in Q$, this term is $\yh_{\{a\}} -\yh_{\{a\}} =0$,
and if $a \notin Q$, this term is $ \yh_{\{a\}} \ge 0$, due to our construction of $\yh$.

For the second case when $S =\emptyset$, we have that $J_{S,T}(\yh) = \yh_{\{\emptyset\}} - \sum_{j \in T} \yh_{\{j\}}$. Hence, 
\begin{align*}
     \sum_{T' \subseteq T} (-1)^{|T'|} &y_{S \cup T'} -  \sum_{ i \in Q} \sum_{T' \subseteq T} (-1)^{|T'|}  y_{S \cup T' \cup \{i\}}      \\
     &=\yh_{\{\emptyset\}}  - \sum_{j \in T} \yh_{\{j\}}-   \sum_{i \in Q} \yh_{\{i\}} +  \sum_{i \in Q} \sum_{j \in T} \yh_{\{i\} \cup \{j\}} \\
     &=\yh_{\{\emptyset\}}  - \sum_{j \in T} \yh_{\{j\}}-   \sum_{i \in Q} \yh_{\{i\}} +  \sum_{i \in Q \setminus T} \sum_{j \in T} \yh_{\{i,j\}} + \sum_{i \in Q \cap T} \sum_{j \in T} \yh_{\{i\} \cup \{j\}} \\
     &=\yh_{\{\emptyset\}}  - (\sum_{j \in T} \yh_{\{j\}} +   \sum_{i \in Q} \yh_{\{i\}} -  \sum_{i \in Q \cap T}  \yh_{\{i\}} ) \\
     &=\yh_{\{\emptyset\}}  -\sum_{i \in Q \cup T}  \yh_{\{i\}}  
     = 1 - |Q \cup T| \frac{1}{\omega(G) + \ell}   \ge 0,
\end{align*}
since $|Q \cup T| \le \omega(G) + \ell$.

Next consider constraint~(\ref{eqn:sajunta}). From the above observation, we have
\[
J_{S,T} (\yh) =
\begin{cases}
    0 & \text{ if } |S| \ge 2\\
    \yh_a  & \text{ if } S= \{a\}\\
    1 - \sum_{t \in T}\yh_t & \text{ if } S = \emptyset
\end{cases}
\]
Noting the fact that $|T| \le \ell$ and $\yh_t = \nicefrac{1}{(\omega(G) + \ell)}$, we have that $J_{S,T}(\yh) \ge 0$.
Finally consider constraint~(\ref{eqn:sa:juntavar}), and let $J_{S,T,i}(y) = \sum_{T' \subseteq T} (-1)^{|T'|} y_{S \cup T' \cup \{i\}}$, for $i \in [n]$.
Now note that $J_{S,T,i}(\yh) =0$ for $|S| \ge 2$ as before. Hence, first consider the case when $|S|=1$. When $S =\{i\}$ then $J_{S,T,i}(\yh) = \yh_{\{i\}} \ge 0$, otherwise for $S=\{j\}, j \ne i$, we have $J_{S,T,i}(\yh) = 0$. Finally, when $S=\emptyset$ then $J_{S,T,i}(\yh) = \sum_{T' \subseteq T}(-1)^{|T'|} \yh_{T' \cup \{i\}} = \yh_{i}$ if $i \notin T$ otherwise $J_{S,T,i}(\yh) = \yh_{i}- \yh_{i} =0$, if $i \in T$.
\end{proof} 
\qed
\end{proof}
\subsubsection{Proof of Theorem~\ref{thm:igsap}.}
For given constant $k \ge 4$, $0 < \epsilon \le \nicefrac{1}{3}$, and sufficiently large $n$, we will show a connected $k$-claw-free graph $G_n$ on $\Theta(n)$ vertices such that $\alpha(G_n) =\Omega(n^\epsilon)$ and the integrality gap of \saplgn, for $\ell=\Theta_k(n^{1- 2\epsilon})$, is at least $\Omega_k(n^{\epsilon})$. To this end, let $t$ be such that Corollary~\ref{cor:ramsey graphs} yields a $(k-1,t)$-Ramsey graph $H_n$ on $\Theta(n^{1-\epsilon})$ vertices. 
Note that $t = O_k( n^{\frac{1-\eps}{k/2}} \log n ) = O_k(n^{\frac{1-\eps}{2}} \log n)$, since $k \ge 4$.
Let $G_n$ be the graph obtained from Lemma~\ref{lm:prop G} with given value of $k$, $p=\Theta(n^{\epsilon}), \tau=t$, and $H_n$. Note that $G_n$ has  $\Theta(n)$ vertices.
Then, we have that $\Omega(n^\epsilon) \le \alpha(G_n) \le O_k(n^\epsilon)$ and $\omega(G_n) =  O_k(n^{\nicefrac{(1-\epsilon)}{2}} \log n)$.
Now using $\ell = \Theta_k(n^{1- 2\epsilon})$ in Theorem~\ref{thm:intgapsapl}, the integrality gap of \saplgn $\ge \frac{n}{\alpha(G) (\omega(G) + \ell)}    = \Omega_k(n^{\epsilon})$, since $\omega(G_n) = O(\ell)$.
\qed

%
%
%
\bibliographystyle{splncs04}
\bibliography{papers}

\begin{thebibliography}{10}
\providecommand{\url}[1]{\texttt{#1}}
\providecommand{\urlprefix}{URL }
\providecommand{\doi}[1]{https://doi.org/#1}

\bibitem{austrin2011inapproximability}
Austrin, P., Khot, S., Safra, M.: Inapproximability of vertex cover and
  independent set in bounded degree graphs. Theory of Computing  \textbf{7}(1),
   27--43 (2011)

\bibitem{bansal2015lovasz}
Bansal, N., Gupta, A., Guruganesh, G.: On the {L}ov{\'a}sz theta function for
  independent sets in sparse graphs. In: Proceedings of the forty-seventh
  annual ACM symposium on Theory of Computing. pp. 193--200 (2015)

\bibitem{Berman00}
Berman, P.: A \emph{d}/2 approximation for maximum weight independent set in
  \emph{d}-claw free graphs. In: Halld{\'{o}}rsson, M.M. (ed.) Algorithm Theory
  - {SWAT} 2000, 7th Scandinavian Workshop on Algorithm Theory, Bergen, Norway,
  July 5-7, 2000, Proceedings. Lecture Notes in Computer Science, vol.~1851,
  pp. 214--219. Springer (2000). \doi{10.1007/3-540-44985-X\_19},
  \url{https://doi.org/10.1007/3-540-44985-X\_19}

\bibitem{BOHMAN20091653}
Bohman, T.: The triangle-free process. Advances in Mathematics
  \textbf{221}(5),  1653--1677 (2009).
  \doi{https://doi.org/10.1016/j.aim.2009.02.018},
  \url{https://www.sciencedirect.com/science/article/pii/S0001870809000620}

\bibitem{bohman2010early}
Bohman, T., Keevash, P.: The early evolution of the h-free process. Inventiones
  mathematicae  \textbf{181}(2),  291--336 (2010)

\bibitem{chalermsook2009maximum}
Chalermsook, P., Chuzhoy, J.: Maximum independent set of rectangles. In:
  Proceedings of the twentieth annual ACM-SIAM symposium on discrete
  algorithms. pp. 892--901. SIAM (2009)

\bibitem{chalermsook2016note}
Chalermsook, P., Vaz, D.: A note on fractional coloring and the integrality gap
  of {LP} for maximum weight independent set. Electronic Notes in Discrete
  Mathematics  \textbf{55},  113--116 (2016)

\bibitem{chalermsook2021coloring}
Chalermsook, P., Walczak, B.: Coloring and maximum weight independent set of
  rectangles. In: Proceedings of the 2021 ACM-SIAM Symposium on Discrete
  Algorithms (SODA). pp. 860--868. SIAM (2021)

\bibitem{chan2009approximation}
Chan, T.M., Har-Peled, S.: Approximation algorithms for maximum independent set
  of pseudo-disks. In: Proceedings of the twenty-fifth annual symposium on
  Computational geometry. pp. 333--340 (2009)

\bibitem{doi:10.1137/1.9781611973075.122}
Chan, Y.H., Lau, L.C.: On Linear and Semidefinite Programming Relaxations for
  Hypergraph Matching, pp. 1500--1511. \doi{10.1137/1.9781611973075.122},
  \url{https://epubs.siam.org/doi/abs/10.1137/1.9781611973075.122}

\bibitem{chandra2001greedy}
Chandra, B., Halld{\'o}rsson, M.M.: Greedy local improvement and weighted set
  packing approximation. Journal of Algorithms  \textbf{39}(2),  223--240
  (2001)

\bibitem{CHUDNOVSKY2010560}
Chudnovsky, M., Seymour, P.: Claw-free graphs vi. colouring. Journal of
  Combinatorial Theory, Series B  \textbf{100}(6),  560--572 (2010).
  \doi{https://doi.org/10.1016/j.jctb.2010.04.005},
  \url{https://www.sciencedirect.com/science/article/pii/S009589561000064X}

\bibitem{cygan2013improved}
Cygan, M.: Improved approximation for 3-dimensional matching via bounded
  pathwidth local search. In: 2013 IEEE 54th Annual Symposium on Foundations of
  Computer Science. pp. 509--518. IEEE (2013)

\bibitem{fleming2019semialgebraic}
Fleming, N., Kothari, P., Pitassi, T., et~al.: Semialgebraic proofs and
  efficient algorithm design. Foundations and Trends{\textregistered} in
  Theoretical Computer Science  \textbf{14}(1-2),  1--221 (2019)

\bibitem{grotschel1984polynomial}
Gr{\"o}tschel, M., Lov{\'a}sz, L., Schrijver, A.: Polynomial algorithms for
  perfect graphs. In: North-Holland mathematics studies, vol.~88, pp. 325--356.
  Elsevier (1984)

\bibitem{hazan2006complexity}
Hazan, E., Safra, S., Schwartz, O.: On the complexity of approximating k-set
  packing. computational complexity  \textbf{15}(1),  20--39 (2006)

\bibitem{hochbaum1983efficient}
Hochbaum, D.S.: Efficient bounds for the stable set, vertex cover and set
  packing problems. Discrete Applied Mathematics  \textbf{6}(3),  243--254
  (1983)

\bibitem{kim}
Kim, J.H.: The {R}amsey number {$R(3, t)$} has order of magnitude $t^2/\log t$.
  Random Struct. Algorithms  \textbf{7}(3),  173–207 (oct 1995)

\bibitem{lasserre2001explicit}
Lasserre, J.B.: An explicit exact sdp relaxation for nonlinear 0-1 programs.
  In: Integer Programming and Combinatorial Optimization: 8th International
  IPCO Conference Utrecht, The Netherlands, June 13--15, 2001 Proceedings 8.
  pp. 293--303. Springer (2001)

\bibitem{lasserre2001global}
Lasserre, J.B.: Global optimization with polynomials and the problem of
  moments. SIAM Journal on optimization  \textbf{11}(3),  796--817 (2001)

\bibitem{laurent2003comparison}
Laurent, M.: A comparison of the sherali-adams, lov{\'a}sz-schrijver, and
  lasserre relaxations for 0--1 programming. Mathematics of Operations Research
   \textbf{28}(3),  470--496 (2003)

\bibitem{lewin2002routing}
Lewin-Eytan, L., Naor, J.S., Orda, A.: Routing and admission control in
  networks with advance reservations. In: Approximation Algorithms for
  Combinatorial Optimization: 5th International Workshop, APPROX 2002 Rome,
  Italy, September 17--21, 2002 Proceedings 5. pp. 215--228. Springer (2002)

\bibitem{lovasz1972characterization}
Lov{\'a}sz, L.: A characterization of perfect graphs. Journal of Combinatorial
  Theory, Series B  \textbf{13}(2),  95--98 (1972)

\bibitem{Neuwohner21}
Neuwohner, M.: An improved approximation algorithm for the maximum weight
  independent set problem in d-claw free graphs. In: Bl{\"{a}}ser, M., Monmege,
  B. (eds.) 38th International Symposium on Theoretical Aspects of Computer
  Science, {STACS} 2021, March 16-19, 2021, Saarbr{\"{u}}cken, Germany (Virtual
  Conference). LIPIcs, vol.~187, pp. 53:1--53:20. Schloss Dagstuhl -
  Leibniz-Zentrum f{\"{u}}r Informatik (2021).
  \doi{10.4230/LIPIcs.STACS.2021.53},
  \url{https://doi.org/10.4230/LIPIcs.STACS.2021.53}

\bibitem{parrilo2000structured}
Parrilo, P.A.: Structured semidefinite programs and semialgebraic geometry
  methods in robustness and optimization. California Institute of Technology
  (2000)

\bibitem{parrilo2003semidefinite}
Parrilo, P.A.: Semidefinite programming relaxations for semialgebraic problems.
  Mathematical programming  \textbf{96},  293--320 (2003)

\bibitem{rothvoss2013lasserre}
Rothvo{\ss}, T.: The lasserre hierarchy in approximation algorithms. Lecture
  Notes for the MAPSP pp. 1--25 (2013)

\bibitem{scott2020survey}
Scott, A., Seymour, P.: A survey of $\chi$-boundedness. Journal of Graph Theory
   \textbf{95}(3),  473--504 (2020)

\bibitem{sherali1990hierarchy}
Sherali, H.D., Adams, W.P.: A hierarchy of relaxations between the continuous
  and convex hull representations for zero-one programming problems. SIAM
  Journal on Discrete Mathematics  \textbf{3}(3),  411--430 (1990)

\bibitem{TW23}
Thiery, T., Ward, J.: An improved approximation for maximum weighted
  \emph{k}-set packing. In: Bansal, N., Nagarajan, V. (eds.) Proceedings of the
  2023 {ACM-SIAM} Symposium on Discrete Algorithms, {SODA} 2023, Florence,
  Italy, January 22-25, 2023. pp. 1138--1162. {SIAM} (2023).
  \doi{10.1137/1.9781611977554.ch42},
  \url{https://doi.org/10.1137/1.9781611977554.ch42}

\bibitem{JedYang}
YANG, J.: On coloring claw-free graphs (2007),
  \url{https://www.mathcs.bethel.edu/yang/papers/clawfree.pdf}

\end{thebibliography}

\appendix

\section{A bad example for \qstab}
\label{sec: qstab bad} 

Let $k \geq 3$. 
We show that the \qstab constraints alone are not sufficient for giving a good approximation for MWIS in $k$-claw-free graphs. 
From Theorem~\ref{thm:ramsery lower bound}, let $G$ be an $n$-vertex graph that has no independent set of size $k$ and no clique of size $t$ where $n = c_k (\frac{t}{\log t})^{(k+1)/2}$ where $c_k$ is a constant depending on $k$. 
This gives $t = \Theta((n \log n)^{\frac{2}{k+1}})$. 
Since $\alpha(G) <k$, this graph is immediately $k$-claw-free. 

Define the variable $z \in {\mathbb R}^{V(G)}$ where $z_i = 1/t$. Clearly, $z \in \qstab(G)$ and the objective value is $n/t \geq (\frac{n}{\log n})^{1-2/(k+1)}$, while $\alpha(G) =k$. This gives us an arbitrary large ratio between the solution in $\qstab$ and an optimal integer solution.  

\section{Proof of Proposition~\ref{prop: SoS and QSTAB}}
\label{sec:prop2 proof}
It suffices to show that $\sum_{i \in Q} z_i \leq 1$ for all clique $Q$ in $G$. 
Let $\{\textbf{v}_I\}_{I \subseteq V(G)}$ be the vector representations of solution $z$. In particular, we know the following connection between solution $z$ and vectors $v$ (Lemma 7 of~\cite{rothvoss2013lasserre})\begin{equation} \label{eq:vectors and z}
\forall I, J \subseteq V(G): v_I \cdot v_J = z_{I \cup J}    
\end{equation}

We remark that the statements hold: 

\begin{itemize}
    \item $||v_{\emptyset}||^2 =1$.  

    \item $v_{\emptyset} \cdot v_i = ||v_i||^2$. Proof: Using Equation~\ref{eq:vectors and z}, we have that $v_i \cdot v_{\emptyset} = z_{i} = v_i \cdot v_i = ||v_i||^2$. 

    \item $v_i \cdot v_j = 0$ for all $(i,j) \in E(G)$. This is due to Corollary 3 in~\cite{rothvoss2013lasserre}. 
\end{itemize}

Finally, we can write $\sum_i z_i$ as $\sum_i v_i \cdot v_{\emptyset} = \sum_i ||v_i||^2 = (\sum_i v_i) \cdot (\sum_i v_i) = ||\sum_i v_i||^2$. 
Notice that the term $\sum_i v_i \cdot v_{\emptyset} = (\sum_i v_i) \cdot v_{\emptyset} \leq ||\sum_i v_i||$. Combining these, we have that $||\sum_i v_i|| \leq 1$ which implies that $\sum_i z_i \leq 1$. 
This completes the proof that $z \in \qstab(G)$.

\section{Integrality gap of linear number of rounds of Sherali--Adams on \qstab}
\label{sec:ig_linear_sa}
We will show the following. We remark that, in order to obtain cleaner parameters, we have  not tried to optimize them. 
\begin{theorem}\label{thm:ig_linear_sa}
  For constant $k\ge 4$, there is an infinite family of $k$-claw-free graphs $\{G_n\}$ on $\Theta(n)$ vertices with  $\alpha(G_n) \ge k$ such that Sherali--Adams on $\qstab(G_n)$ with $\Omega({n}/{f(k)})$ rounds, has integrality gap at least $\Omega({f(k)}/{k^2})$, for any $f(k) = \Omega(k^3)$.  
\end{theorem}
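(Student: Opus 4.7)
The plan is to reuse the construction of Lemma~\ref{lm:prop G} combined with Theorem~\ref{thm:intgapsapl}, but to set the parameter $p$ to the \emph{constant} $p = k$ rather than letting it grow as $n^{\epsilon}$ as in the proof of Theorem~\ref{thm:igsap}. Freezing $\alpha(G_n)$ at a quantity that depends only on $k$ is precisely what allows simultaneously a larger number of Sherali--Adams rounds and a larger integrality gap than in Theorem~\ref{thm:igsap}.

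Concretely, for every sufficiently large $\tau$, I would invoke Corollary~\ref{cor:ramsey graphs} to obtain a $(k-1,\tau)$-Ramsey graph $H$ on $q_k(\tau) = \Theta_k\bigl(\tau^{k/2}\,(\log \tau)^{1/(k-3) - k/2}\bigr)$ vertices, and then feed $p = k$, $\tau$, and $H$ into Lemma~\ref{lm:prop G}. The resulting $G_n$ is connected and $k$-claw-free, has $n := |V(G_n)| = \Theta_k(p\cdot q_k(\tau)) = \Theta_k(q_k(\tau))$ vertices, and satisfies
\[
k \;=\; p \;\le\; \alpha(G_n) \;\le\; 3pk \;=\; 3k^2, \qquad \omega(G_n)\;\le\;3\tau.
\]
Thus the requirements $|V(G_n)| = \Theta(n)$ and $\alpha(G_n)\ge k$ in the statement hold for free.

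Next, I would apply Theorem~\ref{thm:intgapsapl} at level $\ell := \lfloor n/f(k) \rfloor$, which is $\Omega(n/f(k))$. This yields an integrality gap of at least
\[
\frac{n}{\alpha(G_n)\bigl(\omega(G_n)+\ell\bigr)} \;\ge\; \frac{n}{3k^2\,(3\tau+\ell)}.
\]
The only nontrivial verification is that the $\ell$ term dominates the $3\tau$ term in the denominator. Substituting $n = \Theta_k(q_k(\tau))$ and ignoring polylog factors, this amounts to $\tau^{k/2-1} \gtrsim f(k)$; since $k\ge 4$ gives $k/2-1 \ge 1$, the inequality holds for all $\tau$ above a threshold depending on $k$ and $f(k)$. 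Infinitely many such $\tau$ exist, producing the desired infinite family $\{G_n\}$. Once $\ell \ge 3\tau$, the bound simplifies to $\Omega(n/(k^2\ell)) = \Omega(f(k)/k^2)$, as claimed.

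The only foreseeable difficulty is bookkeeping around the $k$-dependent constants hidden inside $\Theta_k(\cdot)$, which affect only the explicit threshold on $\tau$ and not the asymptotic statement. The hypothesis $f(k) = \Omega(k^3)$ plays no role in the proof itself but makes the conclusion meaningful: it guarantees $f(k)/k^2 = \Omega(k)$, so the integrality gap we exhibit is strictly larger than the trivial $k-1$ factor achieved by the greedy algorithm for MWIS on $k$-claw-free graphs.
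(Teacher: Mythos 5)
Your proposal matches the paper's own proof in all essential respects: both set $p = k$ in Lemma~\ref{lm:prop G} to pin $\alpha(G_n)$ at $\Theta_k(1)$, apply Theorem~\ref{thm:intgapsapl} with $\ell \approx |V(G_n)|/f(k)$, and verify that $\omega(G_n) = O(\ell)$ for sufficiently large instances so the denominator is $\Theta(k^2\ell)$. The only differences are cosmetic (you parameterize by $|V(G_n)|$ while the paper parameterizes by $|V(H_n)|$), and your side remark that $f(k)=\Omega(k^3)$ is not actually used in the argument is also consistent with the paper's proof.
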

For comparison, let us recall the bounds of~\cite{doi:10.1137/1.9781611973075.122}.

\begin{theorem}[Theorem 1.3 of~\cite{doi:10.1137/1.9781611973075.122}]\label{thm:cl_ig}
    For constant $k\ge 4$, there are $k$-uniform hypergraphs in which the integrality gap for the Sherali--Adams hierarchy on (LP) is at least $k-2$, even after $\Omega(n/k^3)$ rounds where $n$ denotes the number of vertices.
\end{theorem}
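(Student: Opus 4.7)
\textbf{Proof plan for Theorem~\ref{thm:ig_linear_sa}.} The plan is to reuse the construction from Lemma~\ref{lm:prop G} but with parameters chosen to yield \emph{linear-in-$n$} rounds rather than the sublinear $n^{1-2\epsilon}$ rounds achieved in Theorem~\ref{thm:igsap}. The lower-bound engine is already in place: Theorem~\ref{thm:intgapsapl} gives integrality gap $n/(\alpha(G)(\omega(G)+\ell))$ for \saplg. To make this $\Omega(f(k)/k^2)$ with $\ell = \Omega(n/f(k))$, I need a $k$-claw-free graph on $\Theta(n)$ vertices for which $\alpha(G) = O(k^2)$ and $\omega(G) = O(\ell)$.

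First, I would invoke Lemma~\ref{lm:prop G} with $p := k$, $\tau := t$, and $H$ a $(k-1,t)$-Ramsey graph from Corollary~\ref{cor:ramsey graphs}, choosing $t$ so that $|V(H)| = \Theta(n/k)$; this forces $t = \Theta_k\bigl((n/k)^{2/k}\cdot (\log n)^{\Theta(1)}\bigr)$. By the lemma, the resulting graph $G_n$ is connected, $k$-claw-free, has $\Theta(n)$ vertices, and satisfies $k \le \alpha(G_n) \le 3k\cdot p = 3k^2$ together with $\omega(G_n) \le 3\tau = O_k\bigl(n^{2/k}\cdot \mathrm{polylog}(n)\bigr)$. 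In particular $\alpha(G_n)\ge k$ as required by the statement.

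Next, I would set $\ell := \lfloor n/f(k)\rfloor$. Since $f(k) = O_k(1)$ and $\omega(G_n) = O_k(n^{2/k}\cdot \mathrm{polylog}(n))$, we have $\omega(G_n) = o(n/f(k)) = o(\ell)$ for all sufficiently large $n$, so $\omega(G_n)+\ell = \Theta(\ell) = \Theta(n/f(k))$. Plugging these bounds into Theorem~\ref{thm:intgapsapl} yields
\begin{equation*}
\text{integrality gap} \;\ge\; \frac{n}{\alpha(G_n)\bigl(\omega(G_n)+\ell\bigr)} \;=\; \frac{n}{O(k^2)\cdot \Theta(n/f(k))} \;=\; \Omega\!\left(\frac{f(k)}{k^2}\right),
\end{equation*}
which is exactly the claimed bound at $\ell = \Omega(n/f(k))$ rounds. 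The hypothesis $f(k) = \Omega(k^3)$ only matters in guaranteeing that the integrality gap $\Omega(f(k)/k^2)$ is nontrivially larger than the trivial greedy factor $k$, so no additional work is needed there.

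The only subtlety I anticipate is the $n_0 = n_0(k)$ threshold: I need $n$ large enough that (i) Corollary~\ref{cor:ramsey graphs} applies at the chosen $t$, and (ii) $\omega(G_n) \le \ell$ so that $\omega(G_n)+\ell = \Theta(\ell)$. Both are ensured by a straightforward comparison of the $O_k(n^{2/k}\,\mathrm{polylog}\,n)$ bound on $\omega(G_n)$ against the $\Theta(n/f(k))$ value of $\ell$, and this is the only step that is not immediate from the previously proved lemmas. Everything else is bookkeeping on the parameters of Lemma~\ref{lm:prop G} and Theorem~\ref{thm:intgapsapl}.
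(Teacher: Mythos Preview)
There is a basic mismatch: the statement you were asked to address is Theorem~\ref{thm:cl_ig}, which is Theorem~1.3 of Chan--Lau, quoted in this paper purely for comparison and \emph{not proved here at all}. Your proposal, by its own first line, is a plan for Theorem~\ref{thm:ig_linear_sa}, a different result. The two are not interchangeable: Theorem~\ref{thm:cl_ig} concerns the standard LP relaxation for \ksp on $k$-uniform hypergraphs and asserts a specific gap of $k-2$, whereas your argument builds $k$-claw-free graphs via Lemma~\ref{lm:prop G} (which are not, and need not be, intersection graphs of any $k$-uniform set system), bounds Sherali--Adams on $\qstab$ rather than on the basic LP, and outputs a gap of the form $\Omega(f(k)/k^2)$ rather than $k-2$. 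So nothing in your write-up establishes the cited statement, and the paper itself makes no attempt to; it simply invokes the result of~\cite{doi:10.1137/1.9781611973075.122}.

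If your actual target was Theorem~\ref{thm:ig_linear_sa}, then your plan is essentially identical to the paper's own proof: the same parameter choice $p=k$, the same Ramsey input $H$ with $|V(H)|=\Theta(n/k)$ (equivalently, the paper's $H_n$ on $n$ vertices producing $G_n$ on $\Theta(nk)$ vertices), the same appeal to Lemma~\ref{lm:prop G} to get $k\le\alpha(G_n)\le 3k^2$ and $\omega(G_n)\le 3t$, and the same final substitution of $\ell=\Theta(n/f(k))$ into Theorem~\ref{thm:intgapsapl} after checking $\omega(G_n)\le\ell$ for large $n$. The only cosmetic difference is that the paper makes the threshold explicit via inequality~(\ref{eqn:1}), while you defer it to an asymptotic $\omega(G_n)=o(\ell)$ comparison.
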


Using $f(k)=\Theta(k^3)$ in our theorem, we obtain the following corollary.
\begin{corollary}
      For constant $k\ge 4$, there is an infinite family of $k$-claw-free graphs $\{G_n\}$ on $\Theta(n)$ vertices with  $\alpha(G_n) \ge k$ such that Sherali--Adams on $\qstab(G_n)$ with $\Omega({n}/{k^3})$ rounds, has integrality gap at least $\Omega(k)$.
\end{corollary}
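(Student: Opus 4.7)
The plan is to instantiate Lemma~\ref{lm:prop G} with parameters tuned so that $\alpha(G_n)$ stays of order $k$ while $\omega(G_n)$ remains much smaller than the intended number of Sherali--Adams rounds; then apply Theorem~\ref{thm:intgapsapl}. Conceptually this is a reparameterization of the proof of Theorem~\ref{thm:igsap}: there, $p$ was chosen to grow with $n$ (yielding $\alpha \approx n^\epsilon$ and a sublinear round count); here we keep $p$ independent of $n$, which allows a much larger (linear in $n$) number of rounds at the cost of an integrality gap depending only on $k$.

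Concretely, I would set $p = \Theta(k)$ and pick $\tau$ so that the $(k-1,\tau)$-Ramsey graph $H$ from Corollary~\ref{cor:ramsey graphs} has $\Theta(n/k)$ vertices; since $|V(H)| = \Theta_k(\tau^{k/2}/\mathrm{polylog}(\tau))$, this gives $\tau = O_k(n^{2/k}\,\mathrm{polylog}(n))$. Feeding $H$, $p$, and $\tau$ into Lemma~\ref{lm:prop G} yields a connected $k$-claw-free graph $G_n$ on $\Theta(n)$ vertices with
\[
k \;\le\; p \;\le\; \alpha(G_n) \;\le\; 3pk \;=\; O(k^2) \qquad\text{and}\qquad \omega(G_n) \;\le\; 3\tau \;=\; O_k\!\bigl(n^{2/k}\,\mathrm{polylog}(n)\bigr).
\]

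Next, take $\ell = \Theta(n/f(k))$. Because $k \ge 4$ is constant and $f(k)$ depends only on $k$, for $n$ sufficiently large we have $n^{2/k}\,\mathrm{polylog}(n) = o(n/f(k))$, so $\omega(G_n) = o(\ell)$ and consequently $\omega(G_n) + \ell = \Theta(\ell) = \Theta(n/f(k))$. Substituting into Theorem~\ref{thm:intgapsapl}:
\[
\text{integrality gap of } \saplgn \;\ge\; \frac{n}{\alpha(G_n)\bigl(\omega(G_n)+\ell\bigr)} \;\ge\; \Omega\!\left(\frac{n}{k^2 \cdot n/f(k)}\right) \;=\; \Omega\!\left(\frac{f(k)}{k^2}\right),
\]
and the number of rounds is $\ell = \Omega(n/f(k))$, as required.

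I do not anticipate a real obstacle: the heavy lifting (the claw-free construction and the Sherali--Adams feasibility witness) is already provided by Lemma~\ref{lm:prop G} and Theorem~\ref{thm:intgapsapl}. The only nontrivial sanity check is confirming $\omega(G_n) \ll \ell$, which follows immediately from the polynomial separation between $n^{2/k}$ (with $k$ constant, $k \ge 4$) and $n/f(k)$ for large enough $n$; this is what forces the side condition $f(k) = \Omega(k^3)$ to be compatible and determines how aggressively one can push $\ell$ while retaining the claimed gap.
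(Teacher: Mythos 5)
Your proof is correct and follows essentially the same route as the paper: the paper first proves Theorem~\ref{thm:ig_linear_sa} via exactly the construction you give (Lemma~\ref{lm:prop G} with $p=\Theta(k)$ and a $(k-1,\tau)$-Ramsey graph on $\Theta(n/k)$ vertices from Corollary~\ref{cor:ramsey graphs}, then Theorem~\ref{thm:intgapsapl} with $\ell=\Theta(n/f(k))$ after checking $\omega(G_n)=o(\ell)$), and then derives the corollary by setting $f(k)=\Theta(k^3)$. The only cosmetic difference is that you re-derive the parameterized theorem instead of citing it and stop just short of explicitly substituting $f(k)=\Theta(k^3)$ to read off the stated $\Omega(n/k^3)$ rounds and $\Omega(k)$ gap.
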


Thus, with asymptotically same number of rounds, our bad example has asymptotically the same integrality gap but for Sherali--Adams on \qstab, which is a stronger program than Sherali--Adams on LP. Further, Theorem~\ref{thm:ig_linear_sa} is able to obtain arbitrary large integrality gap depending on the function $f(k)$, whereas Theorem~\ref{thm:cl_ig} has a fixed gap. Hence, for fixed $k$, Theorem~\ref{thm:ig_linear_sa} yields an arbitrary large gap for Sherali--Adams on \qstab, even after $\Omega(n)$ rounds.

\begin{proofof}{Theorem~\ref{thm:ig_linear_sa}}
Given $k \ge 4$, we will show infinitely many graphs $G_n$ on $\Theta(nk)$ vertices with the desired properties. Let $n_0$ be such that for all $n \ge n_0$ it holds that
\begin{equation} \label{eqn:1}
    3\left(\frac{n}{k\cdot c_{k-1}}\right)^{1/2} \log n \le \frac{nk}{f(k)}
\end{equation}
where $c_{k-1}$ is the function of Corollary~\ref{cor:ramsey graphs}. Note that there exists such $n_0$ since $k$ is fixed and $n^{1/2} \log n = o(n)$.
Now, let $t$ be such that Corollary~\ref{cor:ramsey graphs} yields a $(k-1,t)$-Ramsey graph $H_n$ on $n \ge n_0$ vertices. 
Note that $t \le  \left(\frac{n}{k \cdot c_{k-1}}\right)^{2/k} \log n \le   \left(\frac{n}{k \cdot c_{k-1}}\right)^{1/2} \log n $ since $k \ge 4$.
Let $G_n$ be the graph obtained from Lemma~\ref{lm:prop G} with the given value of $k$, $p=k, \tau=t$, and $H_n$. Note that $G_n$ has  $\Theta(nk)$ vertices.
Then, we have that $k \le \alpha(G_n) \le O(k^2)$ and $\omega(G_n) \le 3 \left(\frac{n}{k \cdot c'_{k-1}}\right)^{1/2} \log n$.
Now using $\ell = \frac{nk}{f(k)}= \Theta\left(\frac{|V(G_n)|}{f(k)}\right)$ in Theorem~\ref{thm:intgapsapl} and using the fact $\omega(G_n) \le \ell$ due to Equation~\ref{eqn:1}, the integrality gap of \saplgn $\ge \frac{\Theta(nk)}{\alpha(G_n) (\omega(G_n) + \ell)}    = \Omega\left(\frac{f(k)}{k^2}\right)$.
\end{proofof}

\end{document}